\definecolor{aliceblue}{rgb}{0.94, 0.97, 1.0}
\definecolor{blizzardblue}{rgb}{0.67, 0.9, 0.93}
\definecolor{antiquebrass}{rgb}{0.8, 0.58, 0.46}
\definecolor{beaublue}{rgb}{0.74, 0.83, 0.9}
\theoremstyle{remark} 
\newtheorem{thm}{Theorem}
\newtheorem{cor}{Corollary}
\newtheorem{lem}{Lemma}
\newtheorem{rem}{Remark}
\def\widebar{\accentset{{\cc@style\underline{\mskip10mu}}}}
\def\Widebar{\accentset{{\cc@style\underline{\mskip13mu}}}}
\begin{document}
\title{Storage Space Allocation Strategy for Digital Data with Message Importance}

\author{Shanyun~Liu, Rui~She, Zheqi~Zhu, Pingyi~Fan,~\IEEEmembership{Senior Member, ~IEEE}\\\

\thanks{
Shanyun~Liu, Rui~She, Zheqi~Zhu and Pingyi Fan are with Beijing National Research Center for Information Science and Technology (BNRist) and the Department of Electronic Engineering,  Tsinghua University,  Beijing,  P.R. China, 100084. e-mail: \{liushany16,~sher15,~zhuzq18@mails.tsinghua.edu.cn, fpy@tsinghua.edu.cn.\}. 
}
}

\maketitle
\begin{abstract}
This paper mainly focuses on the problem of lossy compression storage from the perspective of message importance when the reconstructed data pursues the least distortion within limited total storage size.
For this purpose, we transform this problem to an optimization by means of the importance-weighted reconstruction error in data reconstruction.
Based on it, this paper puts forward an optimal allocation strategy in the storage of digital data by a kind of restrictive water-filling.
That is, it is a high efficient adaptive compression strategy since it can make rational use of all the storage space.
It also characterizes the trade-off between the relative weighted reconstruction error and the available storage size.
Furthermore, this paper also presents that both the users' preferences and the special characteristic of data distribution can trigger the small-probability event scenarios where only a fraction of data can cover the vast majority of users' interests.
Whether it is for one of the reasons above, the data with highly clustered message importance is beneficial to compression storage.
 In contrast, the data with uniform information distribution is incompressible, which is consistent with that in information theory.
\end{abstract}

\begin{IEEEkeywords}
Lossy compression storage; Optimal allocation strategy; Weighted reconstruction error; Message importance measure; Importance coefficient
\end{IEEEkeywords}

\IEEEpeerreviewmaketitle
\section{Introduction}
As growing mobile devices such as Internet of things (IoT) devices or smartphones are utilized, the contradiction between limited storage space and sharply increasing data deluge becomes increasingly serious in the era of big data \cite{chen2014big,cai2016iot}. 
This exceedingly massive data makes the conventional data storage mechanisms inadequate within a tolerable time, and therefore the data storage is one of the major challenges in big data \cite{hu2014toward}.
Note that, storing all the data becomes more and more dispensable nowadays, and it is also not conducive to reduce data transmission cost \cite{dong2017content,park2018data}.
In fact, the data compression storage is widely adopted in many applications, such as IoT \cite{cai2016iot}, industrial data platform \cite{geng2019big}, bioinformatics \cite{nalbantoglu2010data}, wireless networking \cite{cao2017towards}.
Thus, the research on data compression storage becomes increasingly paramount and compelling nowadays.

In the conventional source coding, data compression is gotten by removing the data redundancy, where short descriptions are assigned to most frequent class \cite{shannon2001mathematical}.
Based on it, the tight bounds for lossless data compression is given. In order to further increase the compression rate, we need to use more information.
A quintessential example is that we can do source coding with side information \cite{oohama2018exponential}. 
Another possible solution is to compress the data with quiet a few losses first and then reconstruct them with acceptable distortion \cite{pourkamali2017preconditioned,aguerri2016lossy,cui2012distributed}.
In addition, the adaptive compression is adopted extensively.
For example, Ref. \cite{ukil2015adaptive} proposed an adaptive compression scheme in IoT systems, and backlog-adaptive source coding system in age of information is discussed in Ref. \cite{zhang2017backlog}.
In fact, most previous compression methods achieved the target of compression by means of contextual data or leveraging data transformation techniques \cite{dong2017content}. Instead of compressing data based on removing data redundancy or data correlation, as an alternative, this paper will realize this goal by reallocating storage space with taking the importance as the weight in the weighted reconstruction error to minimize the difference between the raw data and the compressed data when used by people. 

Generally, users prefer to care about the crucial part of data that attracts their attentions rather than the whole data itself.
Moreover, different errors may bring different costs in many real-world applications \cite{elkan2001foundations,zhou2006trainging,lomax2013survey,masnick1967on}. 
To be specific, the distortion in the data that users care about may be catastrophic while the loss of the data that is insignificance for users is usually inessential.
Therefore, we can achieve data compression by storing a fraction of data which preserves as much information as possible regarding the data that users care about \cite{liu2017non,tegmark2020pareto}.
This paper also employs this strategy.
However, there are subtle but critical differences between the compression storage strategy proposed in this paper with those in Ref. \cite{liu2017non,tegmark2020pareto}.
In fact, Ref. \cite{tegmark2020pareto} focused on Pareto-optimal data compression, which presents the trade-off between retained
entropy and class information. However, this paper puts forward optimal compression storage strategy for digital data from the viewpoint of message importance, and it gives the trade-off between the weighted reconstruction error and the available storage size. 
Besides, the compression method based on message importance was preliminarily discussed in Ref. \cite{liu2017non} to solve the big data storage problem in wireless communications, while this paper will desire to discuss the optimal storage space allocation strategy with limited storage space in general cases based on message importance. Moreover, the constraints are also different. That is, the available storage size is limited in this paper, while the total code length of all the events is given in Ref. \cite{liu2017non}

Much of the research in the last decade suggested that the study from the perspective of message importance is rewarding to obtain new findings \cite{Ivanchev2016information,Kawanaka2017information,sun2017unequal}.
Thus, there may be effective performance improvement in storage system with taking message importance into account.
For example, Ref. \cite{li2018learning} discussed lossy image compression method with the aid of a content-weighted importance map.
Since that any quantity can be seen as the importance if it agrees with the intuitive characterization of the user's subjective concern degree of data, the cost in data reconstruction for specific user preferences is regarded as the importance in this paper, which will be used as the weight in weighted reconstruction error. 

Since we desire to gain data compression by keeping only a small portion of important data and abandoning less important data, this paper mainly focuses on the case where only a fraction of data take up the vast majority of the users' interests. Actually, this type of scenario is not rare in big data.
A quintessential example should be cited that the minority subset detection is overwhelmingly paramount in intrusion detection \cite{xiaofeng2017research,li2017application}.
Moreover, this phenomenon is also exceedingly typical in financial crime detection systems for the fact that only a few illicit identities catch our eyes to prevent financial frauds \cite{beasley2000fraudulent}.
Actually, when a certain degree of information loss can be acceptable, people prefer to take high-probability events as granted and abandon them to maximize the compressibility. This cases are referred to as {\textit{small-probability event scenarios}} in this paper.
In order to depict the message importance in small-probability event scenarios, message importance measure (MIM) was proposed in Ref. \cite{fan2016message}.
Furthermore, MIM is fairly effective in many applications in big data, such as IoT \cite{she2019importance}, mobile edge computing \cite{she2018recognizing}.
Besides, Ref. \cite{liu2019matching} expanded MIM to the general case, and it presented that MIM can be adopted as a special weight in designing the recommendation system. Thus, this paper will illuminate the properties of this new compression strategy with taking MIM as the importance weight.

In this paper, we firstly propose a particular storage space allocation strategy for digital data on the best effort in minimizing the importance-weighted reconstruction error when the total available storage size is provided.
For digital data, we formulate this problem as an optimization problem, and present the optimal storage strategy by means of a kind of restrictive water-filling.
For given available storage size, the storage size is mainly determined by the values of message importance and probability distribution of event class in data sequence.
In fact, this optimal allocation strategy adaptively prefers to provide more storage size for crucial data classes in order to make rational use of resources, which is in accord with the cognitive mechanism of human beings.

Afterwards, we focus on the properties of this optimal storage space allocation strategy when the importance weights are characterized by MIM.
It is noted that there is a trade-off between the relative weighted reconstruction error (RWRE) and the available storage size.
The constraints on the performance of this storage system are true, and they depend on the importance coefficient and probability distribution of events classes. 
On the one hand, the RWRE increases with increasing of the absolute value of importance coefficient for the fact that the overwhelming majority of important information will gather in a fraction of data as the importance coefficient increases to negative/positive infinity, which suggests the influence of users' preferences.
On the other hand, the compression performance is also affected by probability distribution of event classes. In fact, the more closely the probability distribution matches the requirement of the small-probability event scenarios, the more effective this compression strategy becomes.
Besides, it is also obtained that the uniform distribution is incompressible, which satisfies the conclusion in information theory \cite{Elements}.

The main contributions of this paper can be summarized as follows.
(1) This paper proposes a new digital data compression strategy with taking message importance into account, which can help improve the design of big data storage system.
(2) We illuminate the properties of this new method, which shows that there is a trade-off between the RWRE and the available storage size. 
(3) We find that the data with highly clustered message importance is beneficial to compression storage, while the data with uniform information distribution is incompressible.

The rest of this paper is organized as follows. The system model is introduced in Section \ref{sec: system model}, including the definition of weighted reconstruction error, distortion measure, problem formulation.
In Section \ref{sec: optimal problem}, we solve the problem of optimal storage space allocation in three kinds of system models and give the solutions.
The properties of this optimal storage space allocation strategy based on MIM are fully discussed in Section \ref{sec: MIM}. The effects of the importance coefficient and the probability of event classes on RWRE are also focused on in this section.
Section \ref{sec: NMIM} illuminate the properties of this optimal storage strategy when the importance weight is characterized by Non-parametric MIM. 
The numerical results are shown and discussed in Section \ref{Numerical Results}, which verifies the validity of proposed results in this paper.
Finally, we give the conclusion in Section \ref{sec: conclusion}. Besides, the main notations in this paper are listed in Table~\ref{tab:notation}.
\begin{table}[H]
\centering
    \caption{Notations.}\label{tab:notation}
\begin{tabular}{cl}
\toprule
\textbf{Notation} & \textbf{Description} \\
\midrule
${\emph{\textbf{x}}}=x_1,x_2,...,x_K$  & The sequence of raw data\\
$\hat{\emph{\textbf{{x}}}}=\hat{x}_1,\hat{x}_2,...,\hat{x}_k,...,\hat{x}_K$  & The sequence of compressed data\\
$S_x$ & The storage size of $x$\\
$D_f(S_{x1},S_{x2})$ & The distortion measure function between $S_{x1}$ and $S_{x2}$ in data reconstruction\\
$n$  &  The number of event classes\\
$\{a_1,a_2,...,a_n\}$  &  The alphabet of raw data\\
$\{\hat{a}_1,\hat{a}_2,...,\hat{a}_n\}$  &  The alphabet of compressed data\\
${\emph{\textbf{W}}}=\{W_1,W_2,...,W_n\}$  &  The importance weight \\
${\emph{\textbf{P}}}=\{p_1,p_2,...,p_n\}$  &   The probability distribution of data class\\
$D({\emph{\textbf{x}}},{\emph{\textbf{W}}})$  &  The weighted reconstruction error   \\
$D_r({\emph{\textbf{x}}},{\emph{\textbf{W}}}),D_r(\emph{\textbf{W}},\emph{\textbf{L}},\emph{\textbf{l}})$  &  The relative weighted reconstruction error   \\
${\emph{\textbf{L}}}=L_1,L_2,...,L_n$  & The storage size of raw data\\
${\emph{\textbf{l}}}=l_1,l_2,...,l_n$  & The storage size of compressed data\\
$l_i^*$   &  The round optimal storage size of the data belonging to the $i$-th class \\
$T$    & The maximum available storage size\\
$\varpi$  & The importance coefficient \\
$\gamma_p$  & $\gamma_p=\sum_{i=1}^n p_i^2$\\
$\alpha_1$, $\alpha_2$ & $\alpha_1=\arg \mathop {\min}\nolimits_i p_i$ and $\alpha_2=\arg \mathop {\max}\nolimits_i p_i$  \\
$L(\varpi,\emph{\textbf{p}})$ & The message importance measure, which is given by $L(\varpi,\emph{\textbf{p}})=\ln \sum\nolimits_{i=1}^n {p_i e^{\varpi (1-p_i)}}$  \\
$\Delta$   & The actual compressed storage size, which is given by $\Delta=L-T$   \\
$\Delta^*(\delta)$   & The maximum available compressed storage size for giving upper bound of the RWRE $\delta$  \\
$\mathcal{L}({\textbf{\emph{P}}})$ &  The non-parametric message importance, which is given by $\mathcal{L}({\textbf{\emph{P}}})=\ln \sum\nolimits_{i=1}^n p_i e^{(1-p_i)/p_i} $  \\
\bottomrule
\end{tabular}
\end{table}

\section{System Model}\label{sec: system model}
This section introduces the system model, including the definition of weighted reconstruction error, distortion measure, in order to illustrate how we formulate the lossy compression problem as an optimization problem for digital data based on message importance.

\subsection{Modeling Weighted Reconstruction Error Based on Importance}
We consider a storage system which stores $K$ pieces of data as shown in Figure \ref{fig:model}. Let ${\emph{\textbf{x}}}=x_1,x_2,...,x_k,...,x_K$ be the sequence of raw data, and each data $x_k$ needs to take up storage space with size of $Sx_k$ if this data can be recovered without any distortion. After storing, the compressed data sequence is $\hat{x}_1,\hat{x}_2,...,\hat{x}_k,...,\hat{x}_K$, and the compressed data $\hat{x}_k$ takes up storage space with size of $S\hat{x}_k$ in practice for $1 \le k \le K$. Furthermore, we use the notation $W_k$ to denote the cost of the error when users use the reconstructed data. Namely, $W_k$ is denoted as the importance weight of data $x_k$ for specific user preferences. Therefore, the weighted reconstruction error is given by
\begin{flalign}\label{equ:Def_D}
D({\emph{\textbf{x}}},{\emph{\textbf{W}}})=\sum\nolimits_{k=1}^K W_k D_f(Sx_k,S{\hat{x}}_k),
\end{flalign}
where $D_f(Sx_k,S{\hat{x}}_k)$ characterizes the distortion between the raw data and the compressed data in data reconstruction.

\begin{figure}[H]
  \centerline{\includegraphics[width=15.0cm]{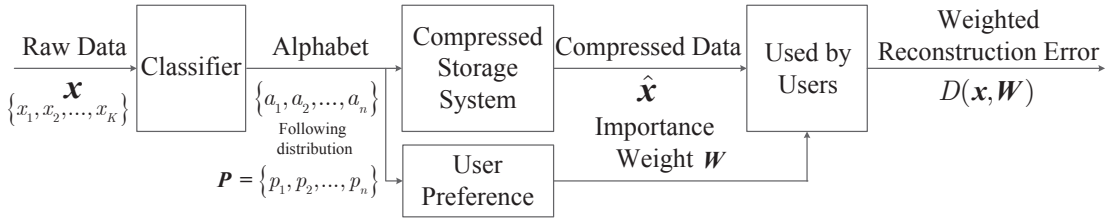}}
  \caption{Pictorial representation of the system model.}\label{fig:model}
\end{figure}

Consider the situation where the data is stored by its category for easier retrieval, which can make the recommendation system based on it more effective \cite{liu2019matching}. 
Since that data classification becomes increasingly convenient and accurate nowadays due to the rapid development of machine learning \cite{aggarwal2014data,Salvador2019compressed}, this paper assumes that the event class can be easily detected and known in storage system.
Moreover, assume the data which belongs to the same class has the same importance-weight and occupies the same storage size. Hence, $\emph{\textbf{x}}$ can be seen as a sequence of $K$ symbols from an alphabet $\{a_1,a_2,...,a_n\}$ where $a_i$ represent event class $i$. In this case, the weighted reconstruction error based on importance is formulated as
\begin{flalign}\label{equ:Def_D1}
D({\emph{\textbf{x}}},{\emph{\textbf{W}}})&=\sum\nolimits_{i=1}^n  \frac{N(a_i|{\emph{\textbf{x}} })}{K}  W_i D_f(Sa_i,S{\hat{a}}_i)   \\
&=\sum\nolimits_{i=1}^n  p_i  W_i D_f(Sa_i,S{\hat{a}}_i)     \tag{\theequation a}\label{equ:Def_D1 a},
\end{flalign}
where $N(a_i|{\emph{\textbf{x}}})$ is the number of times the  $i$-class occurs in the sequence $\emph{\textbf{x}}$. Let $p_i={N(a_i|{\emph{\textbf{x}} })}/{K}$ to denote the probability distribution of event class $i$ in data sequence $\emph{\textbf{x}}$.

\subsection{Modeling Distortion between the Raw Data and the Compressed Data}
In general, the data storage system is lack of storage space when faced with super-large scale of data. If there is limited storage resources which can be assigned to data, the optimization of storage resource allocation will be indispensable. To frame the problem appropriately, it is imperative to characterize the distortion between the raw data and the compressed data with specified storage size. Usually, there is no universal characterization of this distortion measure, especially in speech coding and image coding \cite{Elements}. In order to facilitate the analysis and design, this paper will discuss the following special case. 

We assume that the data is digital. The description of the raw data $a_i$ requires $L_i$ bits, and $a_i=\sum\nolimits_{j=0}^{L_i-1} b_j\times r^{j}$ where $r$ is radix ($r>1$). In particular, $L_i$ will approach the infinite number if $a_i$ is arbitrary real number. In the storage system of this paper, there is only $l_i$ bits assigned to it. For convenience, the smaller $L_i-l_i$ numbers is discarded, and they are random numbers in actual system. Thus, the compressed data is $\hat{a}_i=\sum\nolimits_{j=L_i-l_i}^{L_i-1} b_j\times r^{j}+\sum\nolimits_{j=0}^{L_i-l_i-1} b^*_j\times r^{j}$ where $b^*_j$ is a random number in $\{0,...,r-1\}$. The absolute error is $\vert a_i-\hat{a}_i \vert$, which meets
\begin{flalign}\label{equ:Df}
\vert a_i-\hat{a}_i \vert \le r^{L_i-l_i}-1.
\end{flalign}
When $l_i=0$, which means there is no information stored, the absolute error reaches the maximum and it is $\vert a_i-\hat{a}_i \vert=r^{L_i}-1$. This paper defines the relative error which is normalized by the maximum absolute error as the distortion measure, which is given by
\begin{flalign}\label{equ:}
D_f(Sa_i,S{\hat{a}}_i)=D_f(L_i,l_i)=\frac{r^{L_i-l_i}-1}{r^{L_i}-1}.
\end{flalign}
In particular, we obtain $D_f(L_i,L_i)=0$ and $D_f(L_i,0)=1$.
Moreover, it is easy to check that $0 \le D_f(L_i,l_i) \le 1$ and $D_f(L_i,l_i)$ decreases with the increasing of $l_i$.

To simplify the comparisons under different conditions , the weighted reconstruction error is also normalized to the RWRE. Then the RWRE is given by
\begin{flalign}\label{equ:Def_RWRE}
D_r({\emph{\textbf{x}}},{\emph{\textbf{W}}})=D_r(\emph{\textbf{W}},\emph{\textbf{L}},\emph{\textbf{l}})=\frac{D({\emph{\textbf{x}}},{\emph{\textbf{W}}})}{   \mathop {\max }\limits_{l_i} D({\emph{\textbf{x}}},{\emph{\textbf{W}}})  }= \frac { \sum\nolimits_{i=1}^n  p_i  W_i D_f(L_i,l_i)}  { \sum\nolimits_{i=1}^n  p_i  W_i }=  \frac { \sum\nolimits_{i=1}^n  p_i  W_i \frac{r^{L_i-l_i}-1}{r^{L_i}-1}}  { \sum\nolimits_{i=1}^n  p_i  W_i },   
\end{flalign} 
where $\emph{\textbf{L}}=\{L_1,...,L_n\}$ and $\emph{\textbf{l}}=\{l_1,...,l_n\}$.

\subsection{Problem Formulation}
\subsubsection{General Storage System}
In fact, the available storage space can then be expressed as $\sum\nolimits_{i=1}^n p_i l_i$.
For each given target maximum available storage space constraint $\sum\nolimits_{i=1}^n p_i l_i \le T$, we shall optimize storage resources allocation strategy of this system by minimizing the RWRE, which can be expressed as
\begin{flalign}\label{equ:optimal storage strategy}
  \mathcal{P}_1: \,\, \mathop {\min }\limits_{l_i}\,\,\, &   D_r({\emph{\textbf{x}}},{\emph{\textbf{W}}}) \\
\textrm{s.t.}\,\,\,& \sum\limits_{i=1}^n p_i l_i \le T \tag{\theequation a}\label{equ:optimal storage strategy a}\\
& 0\le l_i \le L_i \,\,\textrm{for} \,\,i=1,2,...,n .\tag{\theequation b}\label{equ:optimal storage strategy b}
\end{flalign}
The storage systems which can be characterized by Problem $  \mathcal{P}_1$ are referred to as \textit{general storage system}.

\begin{rem}
In fact, this paper focuses on allocating resources by category with taking message importance into account, while the conventional source coding searches the shortest average description length of a random variable. 
\end{rem}

\subsubsection{Ideal Storage system}
In practice, the storage size of raw data is the same frequently for ease of use.
Thus, we focus on the case where the original storage size is the same for simplifying the analysis in this paper, and use $L$ to denote it (i.e., $L_i=L$ for $i=1,2,...,n$). As a result, we have
\begin{flalign}\label{equ: initial_Dr_L}
\mathop {\min }\limits_{l_i}   D_r({\emph{\textbf{x}}},{\emph{\textbf{W}}}) =  \frac{r^{L}  \mathop {\min }\limits_{l_i} \sum\limits_{i=1}^n  p_i  W_i {r^{-l_i}}   }{(r^{L}-1) \sum\limits_{i=1}^n p_i W_i}  - \frac{1}{r^{L}-1} .
 \end{flalign}
 Thus, the problem $\mathcal{P}_1$ can be rewritten as
\begin{flalign}\label{equ:optimal storage strategy1}
\mathcal{P}_2: \,\, \mathop {\min }\limits_{l_i}\,\,\, & \sum\limits_{i=1}^n  p_i  W_i  r^{-l_i}  \\
\textrm{s.t.}\,\,\,& \sum\limits_{i=1}^n p_i l_i \le T \tag{\theequation a}\label{equ:optimal storage strategy1 a}\\
& 0\le l_i \le L \,\,\textrm{for} \,\,i=1,2,...,n .\tag{\theequation b}\label{equ:optimal storage strategy1 b}
\end{flalign}
Since we will mainly focus on the characteristics of the solutions in Problem $\mathcal{P}_2$ in this paper, we use \textit{ideal storage system} to represent this model in later sections of this paper.

\subsubsection{Quantification Storage System}
A \textit{quantification storage system} quantizes and stores the real data acquired from sensors in the real world.
The data is usually a real number, which requires infinite number bits to describe it accurately.
That is, the original storage size of each class approaches the infinite number, (i.e., $L_i =L\to +\infty$ for $i=1,2,...,n$), in this case.
As a result, the RWRE can be rewritten as
\begin{flalign}\label{equ: RWRE in NMIM}
D_r({\emph{\textbf{x}}},{\emph{\textbf{W}}})=\mathop {\lim }\limits_{L \to \infty} \left\{ \frac{ \sum\limits_{i=1}^n  p_i  W_i {r^{-l_i}}   }{(1-r^{-L}) \sum\limits_{i=1}^n p_i W_i}  - \frac{1}{r^{L}-1} \right\}=\frac{  \sum\limits_{i=1}^n  p_i  W_i {r^{-l_i}}   }{ \sum\limits_{i=1}^n p_i W_i}.
\end{flalign}
Therefore, the problem $\mathcal{P}_1$ in this case is reduced to
\begin{flalign}\label{equ:optimal storage strategy2}
\mathcal{P}_3: \,\, \mathop {\min }\limits_{l_i}\,\,\, & \sum\limits_{i=1}^n  p_i  W_i  r^{-l_i}  \\
\textrm{s.t.}\,\,\,& \sum\limits_{i=1}^n p_i l_i \le T \tag{\theequation a}\label{equ:optimal storage strategy2 a}\\
&  l_i \ge 0 \,\,\textrm{for} \,\,i=1,2,...,n .\tag{\theequation b}\label{equ:optimal storage strategy2 b}
\end{flalign}

\section{Optimal Allocation Strategy with Limited Storage Space}\label{sec: optimal problem}
In this section, we shall first solve the problem $\mathcal{P}_1$ and give the solutions. In fact, the solutions provide the optimal storage space allocation strategy for digital data on the best effort in minimizing the RWRE when the total available storage size is limited. Then, the problem $\mathcal{P}_2$ will be solved, whose solution characterizes the optimal storage space allocation strategy with the same original storage size. Moreover, we shall also discuss the solution in the case where the original storage size of each class approaches the infinite number by studying the problem $\mathcal{P}_3$.

\subsection{Optimal Allocation Strategy in General Storage System}
\begin{thm}\label{thm:optimal strategy general}
For a storage system with probability distribution $(p_1,p_2,...,p_n)$, $L_i$ is the storage size of the raw data of the class $i$ for $i=1,2,...,n$.
For a given maximum available storage space $T$ ($0 \le T \le \sum\nolimits_{i=1}^n p_i L_i$), when the radix is $r$ ($r>1$), the solution of Problem $\mathcal{P}_1$ is given by
\begin{flalign} \label{equ:thm optimal strategy general}
l_i= &\left\{
   \begin{aligned}
 & \quad\quad\quad\quad 0\quad\quad\quad\quad\quad\quad\quad\quad\quad\quad\quad\quad\quad\,\,\,\,\,\,\, \textit{if} \,\, l_i<0,\\
 &\frac{ \ln(\ln r) +\ln W_i -\ln (1-r^{-L_i}) - \ln \lambda^* }{\ln r} \quad\quad \textit{if} \,\, 0\le l_i \le L_i,\\
 & \quad\quad\quad\quad L_i \quad\quad\quad\quad\quad\quad\quad\quad\quad\quad\quad\quad\quad\,\,\,\,\,\textit{if} \,\, l_i>L_i,
   \end{aligned}
   \right.
\end{flalign}
where $\lambda^*$ is chosen so that $\sum\nolimits_{i=1}^n p_i l_i=T$.
\end{thm}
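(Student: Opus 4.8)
The plan is to treat $\mathcal{P}_1$ as a convex program and solve it with the Karush--Kuhn--Tucker (KKT) conditions. First I would discard everything in the objective that does not depend on $\emph{\textbf{l}}=(l_1,\dots,l_n)$: in \eqref{equ:Def_RWRE} the denominator $\sum_i p_iW_i$ and the additive constant produced by the ``$-1$'' in $\frac{r^{L_i-l_i}-1}{r^{L_i}-1}$ are fixed, so minimizing $D_r(\emph{\textbf{W}},\emph{\textbf{L}},\emph{\textbf{l}})$ is equivalent to minimizing $f(\emph{\textbf{l}})=\sum_{i=1}^n c_i\,r^{-l_i}$ with $c_i=\frac{p_iW_i}{1-r^{-L_i}}>0$. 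Each summand is strictly convex in $l_i$ because $r>1$, hence $f$ is convex; the feasible region cut out by \eqref{equ:optimal storage strategy a}--\eqref{equ:optimal storage strategy b} is a compact polytope, nonempty thanks to the hypothesis $0\le T\le\sum_i p_iL_i$. Since the objective is convex and all constraints affine, a global minimizer exists and the KKT conditions are both necessary and sufficient for it.

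Next I would attach a multiplier $\lambda\ge 0$ to the budget constraint \eqref{equ:optimal storage strategy a} and multipliers to the bound constraints \eqref{equ:optimal storage strategy b}. Because $f$ is strictly decreasing in every coordinate, no budget is ever left unused unless all bounds $l_i=L_i$ are saturated; thus for $T<\sum_i p_iL_i$ the budget constraint is tight, $\sum_i p_il_i=T$, while at $T=\sum_i p_iL_i$ the solution is the degenerate $l_i=L_i$ for all $i$. For a class $i$ with $0<l_i<L_i$, complementary slackness removes the bound multipliers and stationarity reduces to $-c_i(\ln r)r^{-l_i}+\lambda p_i=0$. Solving for $l_i$ and inserting $c_i=p_iW_i/(1-r^{-L_i})$ cancels the factor $p_i$ and yields $l_i=\frac{\ln(\ln r)+\ln W_i-\ln(1-r^{-L_i})-\ln\lambda}{\ln r}$, which is exactly the middle branch of \eqref{equ:thm optimal strategy general} with $\lambda^*=\lambda$.

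It remains to fold in the bounds and fix $\lambda^*$. If the interior expression falls below $0$ the constraint $l_i\ge 0$ must be active, so $l_i=0$; if it exceeds $L_i$ then $l_i=L_i$ --- this clipping is precisely the ``restrictive water-filling'' in the statement. To see $\lambda^*$ is well defined, observe that the clipped quantity $l_i(\lambda)=\left[\frac{\ln(\ln r)+\ln W_i-\ln(1-r^{-L_i})-\ln\lambda}{\ln r}\right]_0^{L_i}$ is continuous and nonincreasing in $\lambda$, so $g(\lambda)=\sum_i p_il_i(\lambda)$ is continuous and nonincreasing, sweeping from $\sum_i p_iL_i$ down to $0$; hence there is a $\lambda^*$ (unique up to a plateau of $g$, on which the $l_i$ are constant) with $g(\lambda^*)=T$, and substituting it back gives the claimed solution. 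The main obstacle I expect is the bookkeeping around the bound constraints --- checking that the clipped point satisfies every KKT condition with the signs of the two bound multipliers coming out correctly --- together with the monotonicity argument underpinning the existence of $\lambda^*$; the remaining algebra is routine.
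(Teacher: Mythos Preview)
Your proposal is correct and follows essentially the same route as the paper: both reduce $\mathcal{P}_1$ to a convex minimization, apply the KKT conditions, derive the interior stationarity formula $l_i=\frac{\ln(\ln r)+\ln W_i-\ln(1-r^{-L_i})-\ln\lambda^*}{\ln r}$, and then clip to $[0,L_i]$ via complementary slackness. Your version is in fact slightly more thorough than the paper's, which omits the convexity/sufficiency discussion, the argument that the budget constraint binds, and the monotonicity argument guaranteeing that a suitable $\lambda^*$ exists.
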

\begin{proof}
By means of Lagrange multipliers and Karush-Kuhn-Tucher conditions, when ignoring the constant $\sum\nolimits_{i=1}^n p_i W_i$, we set up the functional
\begin{flalign}\label{equ: proff kkt}
{\nabla _l}\left\{ { \sum\limits_{i=1}^n  p_i  W_i \frac{r^{L_i-l_i}-1}{r^{L_i}-1}} +\lambda^* (\sum\limits_{i=1}^n p_i l_i - T)+ \mu_1 (l_1-L_1)+...+\mu_n (l_n-L_n) \right\}=0& \\
\sum\limits_{i=1}^n p_i l_i - T=0&  \tag{\theequation a} \label{equ: proff kkt a}\\
\mu_i(l_i-L_i)=0 \,\,\textrm{for} \,\,i=1,2,...,n  &\tag{\theequation b} \label{equ: proff kkt b}\\
l_i-L_i \le 0 \,\,\textrm{for} \,\,i=1,2,...,n   &\tag{\theequation c} \label{equ: proff kkt c} \\
\mu_i \ge 0 \,\,\textrm{for} \,\,i=1,2,...,n   &\tag{\theequation d} \label{equ: proff kkt d} \\
l_i \ge 0 \,\,\textrm{for} \,\,i=1,2,...,n   &\tag{\theequation e} \label{equ: proff kkt e} 
\end{flalign}
Hence, we obtain
\begin{flalign}\label{equ: li_langrange}
l_i=\frac{\ln p_i +\ln(\ln r) +\ln W_i -\ln (1-r^{-L_i}) - \ln (\lambda^* p_i+\mu_i)}{\ln r}.
\end{flalign}

First, it is easy to check that Equation (\ref{equ: proff kkt b})-(\ref{equ: proff kkt d}) hold when $\mu_i=0$ and $l_i \le L_i$. Hence, we have
\begin{flalign}\label{equ: li_langrange1}
l_i=\frac{ \ln(\ln r) +\ln W_i -\ln (1-r^{-L_i}) - \ln \lambda^* }{\ln r}.
\end{flalign}

Second, if $l_i$ in Equation (\ref{equ: li_langrange}) is larger than $L_i$, we will have $\mu_i>0$ and $l_i = L_i$ due to Equation (\ref{equ: proff kkt b})-(\ref{equ: proff kkt d}).

Third, if $l_i<0$, we will let $l_i=0$ according to Equation (\ref{equ: proff kkt e}).

Moreover, $\lambda^*$ is chosen so that $\sum\nolimits_{i=1}^n p_i l_i=T$ due to Equation (\ref{equ: proff kkt a}).

Therefore, based on the discussion above, we get Equation (\ref{equ:thm optimal strategy general}) in order to ensure $0\le l_i \le L_i$.
\end{proof}

\begin{rem}\label{rem: def of Ij Tj}
Let ${\tilde N}$ be the number of $l_i$ which meets $0\le l_i \le L_i$ and $\{I_j, j=1,2,...,\tilde N\}$ is part of the sequence of $\{1,2,...,N\}$ which satisfies $0 \le  { \ln(\ln r) +\ln W_{I_j} -\ln (1-r^{-L_{I_j} }) - \ln \lambda^* }\le L_{I_j}{\ln r} $. Furthermore, $\{T_j, j=1,2,...,\tilde N_L\}$ is used to denote the part of the sequence of $\{1,2,...,N\}$ which satisfies $ { \ln(\ln r) +\ln W_{T_j} -\ln (1-r^{-L_{T_j}}) - \ln \lambda^* }>L_{T_j}{\ln r} $. 
\end{rem}

Substituting Equation (\ref{equ:thm optimal strategy general}) in the constraint $\sum\nolimits_{i=1}^n p_i l_i = T$, we have
\begin{flalign} \label{equ:lambda general}
\ln \lambda^*=\ln\ln r+\frac{ \sum\nolimits_{j=1}^{\tilde N} p_{I_j}\ln W_{I_j} -\sum\nolimits_{j=1}^{\tilde N} p_{I_j} \ln (1-r^{-L_{I_j}})   -\ln r(T-\sum\nolimits_{j=1}^{\tilde N_L} p_{T_j} L_{T_j})  } {\sum\nolimits_{j=1}^{\tilde N} p_{I_j} }.
\end{flalign}

Hence, for $0 \le l_i \le L$, we obtain
\begin{flalign} \label{equ: li for W}
l_i=\frac{T-\sum\nolimits_{j=1}^{\tilde N_L} p_{T_j} L_{T_j}}{ \sum\nolimits_{j=1}^{\tilde N} p_{I_j}}+\frac{\ln W_i}{\ln r} -\frac{  {\sum\nolimits_{j=1}^{\tilde N}  p_{I_j} \ln W_{I_j}}   }{\ln r { \sum\nolimits_{j=1}^{\tilde N} p_{I_j}}}   +\frac{  {\sum\nolimits_{j=1}^{\tilde N}  p_{I_j} \ln (1-r^{-L_{I_j}}) }  }{\ln r { \sum\nolimits_{j=1}^{\tilde N} p_{I_j}}}.
\end{flalign}

In fact, $T$, $p_i$, $r$, $L_i$ are usually constraints for a given recommendation system, and therefore $l_i$ is only determined by the second and the third items on the right side of Equation (\ref{equ: li for W}), which means the storage size depends on the message importance and the probability distribution of class for given available storage size.
\begin{rem}\label{rem: l_star_floor general}
Since the actual compressed storage size $l^*_i$ must be integer, the actual storage size allocation strategy is
\begin{flalign}\label{equ: l_star_floor general}
l^*_i = \min \left( {\left\lfloor \frac{T-\sum\nolimits_{j=1}^{\tilde N_L} p_{T_j} L_{T_j}    }{ \sum\nolimits_{j=1}^{\tilde N} p_{I_j}}+\frac{\ln W_i}{\ln r} -\frac{  {\sum\nolimits_{j=1}^{\tilde N}  p_{I_j} \ln W_{I_j}}   }{\ln r { \sum\nolimits_{j=1}^{\tilde N} p_{I_j}}}   +\frac{  {\sum\nolimits_{j=1}^{\tilde N}  p_{I_j} \ln (1-r^{-L_{I_j}}) }  }{\ln r { \sum\nolimits_{j=1}^{\tilde N} p_{I_j}}}  \right\rfloor}^+, L \right),
\end{flalign}
where $(x)^+$ is equal to $x$ when $x \ge 0$, and it is zero when $x <0$. In addition, $\lfloor x \rfloor$ is the largest integer smaller than or equal to $x$.
\end{rem}


\subsection{Optimal Allocation Strategy in Ideal Storage System}
Then, we pay attention to the case where the original storage size is the same for simplifying the analysis. Based on Theorem \ref{thm:optimal strategy general}, we get the following corollary in ideal storage system.
\begin{cor}\label{cor:thm optimal strategy}
For a storage system with probability distribution $(p_1,p_2,...,p_n)$, the original storage size of each class is the same, which is given by $L_i=L$ for $i=1,2,...,n$.
For a given maximum available storage space $T$ ($0 \le T \le  L$), when the radix is $r$ ($r>1$), the solution of Problem $\mathcal{P}_2$ is given by
\begin{flalign} \label{equ:thm optimal strategy}
l_i= &\left\{
   \begin{aligned}
 & \quad\quad 0\quad\quad\quad\quad\quad\quad\quad\,\,\,\, \quad\,\,  \textrm{if} \,\, l_i<0,\\
 &\frac{ \ln(\ln r) +\ln W_i  - \ln \lambda }{\ln r}\quad\,\,\,\, \textrm{if} \,\, 0\le l_i \le L,\\
 & \quad\quad L \quad\quad\quad\quad\quad\quad\quad\,\,\,\,\,\quad \textrm{if} \,\, l_i>L,
   \end{aligned}
   \right.
\end{flalign}
where $\lambda$ is chosen so that $\sum\nolimits_{i=1}^n p_i l_i=T$.
\end{cor}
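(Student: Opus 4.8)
The plan is to derive Corollary~\ref{cor:thm optimal strategy} directly as a specialization of Theorem~\ref{thm:optimal strategy general}. First I would observe that Problem~$\mathcal{P}_2$ is exactly Problem~$\mathcal{P}_1$ under the additional hypothesis $L_i=L$ for all $i$, since equation~(\ref{equ: initial_Dr_L}) shows that minimizing $D_r(\emph{\textbf{x}},\emph{\textbf{W}})$ over $l_i$ in the ideal setting is equivalent to minimizing $\sum_{i=1}^n p_i W_i r^{-l_i}$, which is precisely the reduction carried out in passing from $\mathcal{P}_1$ to $\mathcal{P}_2$. Because the objective and constraint structure of $\mathcal{P}_1$ and $\mathcal{P}_2$ coincide once $L_i\equiv L$, the KKT stationarity/complementary-slackness system (\ref{equ: proff kkt})--(\ref{equ: proff kkt e}) applies verbatim with $L_i$ replaced by $L$.

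Next I would substitute $L_i=L$ into the interior solution~(\ref{equ: li_langrange1}). The only term that depends on $L_i$ there is $-\ln(1-r^{-L_i})/\ln r$; with $L_i=L$ this becomes a constant that is the same for every class $i$. I would absorb this common constant into the multiplier by defining $\lambda := \lambda^*(1-r^{-L})$ (equivalently $\ln\lambda = \ln\lambda^* + \ln(1-r^{-L})$). Making this change of variable turns $\frac{\ln(\ln r)+\ln W_i-\ln(1-r^{-L})-\ln\lambda^*}{\ln r}$ into $\frac{\ln(\ln r)+\ln W_i-\ln\lambda}{\ln r}$, which is exactly the middle branch claimed in~(\ref{equ:thm optimal strategy}). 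The boundary branches $l_i=0$ when the formula returns a negative value and $l_i=L$ when it exceeds $L$ carry over unchanged from Theorem~\ref{thm:optimal strategy general}, since those cases were handled there purely through the complementary-slackness conditions (\ref{equ: proff kkt b})--(\ref{equ: proff kkt e}), which are unaffected by the relabeling of the multiplier. Finally, since the constraint $\sum_{i=1}^n p_i l_i = T$ is preserved by the substitution, the redefined $\lambda$ is still pinned down by $\sum_{i=1}^n p_i l_i = T$, and the admissible range of $T$ becomes $0\le T\le \sum_{i=1}^n p_i L = L$.

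The only genuinely delicate point — and the step I would treat most carefully — is the legitimacy of folding the $L$-dependent constant into the multiplier while simultaneously claiming the two boundary branches are unchanged. One must check that the relabeling $\lambda=\lambda^*(1-r^{-L})$ is a monotone, positive rescaling (it is, since $0<1-r^{-L}<1$ for $r>1$ and finite $L$), so the sign of "$l_i<0$" versus "$l_i>L$" is determined identically whether phrased in terms of $\lambda^*$ or $\lambda$; hence the three-way case split in~(\ref{equ:thm optimal strategy general}) maps bijectively onto that in~(\ref{equ:thm optimal strategy}). Everything else is routine substitution, so the corollary follows without further optimization-theoretic work.
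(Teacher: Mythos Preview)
Your proposal is correct and follows essentially the same approach as the paper: the paper's own proof simply sets $\lambda=\lambda^*(1-r^{-L})$ and $L_i=L$ for all $i$, then substitutes into Equation~(\ref{equ:thm optimal strategy general}) to obtain Equation~(\ref{equ:thm optimal strategy}). Your additional care in checking that the rescaling $\lambda=\lambda^*(1-r^{-L})$ preserves the three-way case split is more detail than the paper provides, but the underlying argument is identical.
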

\begin{proof}
Let $\lambda=\lambda^*(1-r^{-L})$ and $L_i=L$ for $i=1,2,...,n$. Substituting them in Equation (\ref{equ:thm optimal strategy general}), we find that $l_i$ in this case can be rewritten as Equation (\ref{equ:thm optimal strategy}).
\end{proof}

Substituting Equation (\ref{equ:thm optimal strategy}) in the constraint $\sum\nolimits_{i=1}^n p_i l_i = T$, we obtain
\begin{flalign} \label{equ:lambda}
\ln \lambda=\ln\ln r+\frac{ \sum\nolimits_{j=1}^{\tilde N} p_{I_j}\ln W_{I_j}-\ln r(T-T_{N_L})  } {\sum\nolimits_{j=1}^{\tilde N} p_{I_j} },
\end{flalign}
where $\tilde N$, $\tilde N_L$, $I_j$, $T_j$ is still given by Remark \ref{rem: def of Ij Tj} with letting $\lambda^*=\lambda^*(1-r^{-L})$. In addition, $T_{N_L}=\sum\nolimits_{j=1}^{\tilde N_L} p_{T_j} L$.
Hence, for $0 \le l_i \le L$, we obtain
\begin{flalign} 
l_i=\frac{T-T_{N_L }}{ \sum\nolimits_{j=1}^{\tilde N} p_{I_j}}+\frac{\ln W_i}{\ln r} -\frac{  {\sum\nolimits_{j=1}^{\tilde N}  p_{I_j} \ln W_{I_j}}   }{\ln r { \sum\nolimits_{j=1}^{\tilde N} p_{I_j}}}.
\end{flalign}
\begin{rem}\label{rem: l_star_floor}
Since the actual compressed storage size $l^*_i$ must be integer, the actual storage size allocation strategy is
\begin{flalign}\label{equ: l_star_floor}
l^*_i = \min \left( {\left\lfloor \frac{T-T_{N_L}}{ \sum\nolimits_{j=1}^{\tilde N} p_{I_j}} +\frac{ \ln W_i  }{\ln r}-\frac{  {\sum\nolimits_{j=1}^{\tilde N}  p_{I_j} \ln W_{I_j}}  }{\ln r { \sum\nolimits_{j=1}^{\tilde N} p_{I_j}} }  \right\rfloor}^+, L \right).
\end{flalign}
\end{rem}

\begin{rem}\label{rem:li_Nn}
When ${\tilde N}=n$, $0 \le l_i \le L$ always holds for $1\le i\le n$, and the actual storage size is given by
\begin{flalign}\label{equ: li_Nn}
l^*_i=\left\lfloor T+\frac{\ln W_i - {\sum\nolimits_{i=1}^{n}  p_i \ln W_i  } }{\ln r}   \right\rfloor.
\end{flalign}
\end{rem}

In order to illustrate the geometric interpretation of this algorithm, we might as well take
\begin{flalign}
\beta=\frac{\ln \ln r -\ln \lambda}{\ln r},
\end{flalign}
and the optimal storage size can be simplified to
\begin{flalign} \label{equ:thm optimal strategy water-filling}
l_i= &\left\{
   \begin{aligned}
 & \quad\quad 0,\quad\quad\quad\quad\quad\quad\quad\,\,\,\, \textrm{if} \,\, \beta -\frac{\ln (1/W_i)}{\ln r}<0.\\
 & \beta -\frac{\ln (1/W_i)}{\ln r}\quad\quad\,\,\,\quad\quad \textrm{if} \,\, 0\le \beta -\frac{\ln (1/W_i)}{\ln r} \le L.\\
 & \quad\quad L, \quad\quad\quad\quad\quad\quad\quad\,\,\,\,\textrm{if} \,\, \beta -\frac{\ln (1/W_i)}{\ln r}>L.
   \end{aligned}
   \right.
\end{flalign}

The monotonicity of optimal storage size with respect to importance weight is discussed in the following theorem.
\begin{thm}\label{thm:p_l_W_monotonicity}
Let $(p_1,p_2,...,p_n)$ be a probability distribution and ${{\textrm{W}}}={W_1,...,W_n}$ be importance weights. $L$ and $r$ are fixed positive integers ($r>1$). The solution of Problem $\mathcal{P}_2$ meets: $l_i  \ge l_j$ if $W_i> W_j$ for $\forall i,j \in \{1,2,...,n\}$.
\end{thm}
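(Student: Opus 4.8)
The plan is to read the claim directly off the explicit ``restrictive water-filling'' form of the optimizer. By Corollary~\ref{cor:thm optimal strategy}, and in the more convenient form of Equation~(\ref{equ:thm optimal strategy water-filling}), the solution of $\mathcal{P}_2$ satisfies
\[
l_i=\phi\!\left(\beta+\frac{\ln W_i}{\ln r}\right),\qquad \phi(t):=\max\{0,\min\{L,t\}\},
\]
where I have used $-\ln(1/W_i)=\ln W_i$ and $\beta=(\ln\ln r-\ln\lambda)/\ln r$. The first point I would emphasize is that $\beta$ (equivalently $\lambda$) is a \emph{single} constant shared by all classes: it is pinned down solely by the scalar budget equality $\sum_{i=1}^n p_i l_i=T$ and carries no dependence on $i$. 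Hence the only index-dependent quantity inside $\phi$ is $W_i$.

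Next I would verify that $W\mapsto \phi\!\left(\beta+\ln W/\ln r\right)$ is nondecreasing. Since $r>1$ we have $\ln r>0$, so $W\mapsto \beta+\ln W/\ln r$ is strictly increasing; and $\phi$ is nondecreasing, being the composition of the two nondecreasing maps $t\mapsto\min\{L,t\}$ and $s\mapsto\max\{0,s\}$. A composition of nondecreasing functions is nondecreasing, so $W_i>W_j$ forces $l_i=\phi(\beta+\ln W_i/\ln r)\ge \phi(\beta+\ln W_j/\ln r)=l_j$, which is exactly the assertion. This is essentially the whole argument; there is no real obstacle here beyond making the ``common water level'' observation explicit, so that the ordering of the $l_i$ is seen to be governed purely by the ordering of the $W_i$.

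For completeness I would also note that the monotonicity can be obtained without the closed form, by a short exchange argument: if $W_i>W_j$ but $l_i<l_j$ at an optimum, then (since $0\le l_i<l_j\le L$) one may decrease $l_j$ by $\delta$ and increase $l_i$ by $\delta p_j/p_i$ for small $\delta>0$, keeping $\sum_k p_k l_k$ fixed and all variables in $[0,L]$; the first-order change of the objective $\sum_k p_k W_k r^{-l_k}$ equals $\delta(\ln r)\,p_j\bigl(W_j r^{-l_j}-W_i r^{-l_i}\bigr)<0$, because $r^{-l_i}>r^{-l_j}$ and $W_i>W_j$, contradicting optimality. I would present the water-filling derivation as the main line of the proof and mention this exchange argument as a remark.
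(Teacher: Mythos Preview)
Your proposal is correct and rests on the same explicit water-filling formula (Equation~(\ref{equ:thm optimal strategy water-filling})) that the paper uses; the paper's proof simply carries out the six-case enumeration of $(W_i,W_j)$ falling into the three regions $\{W<e^{-\beta\ln r}\}$, $\{e^{-\beta\ln r}\le W\le e^{(L-\beta)\ln r}\}$, $\{W>e^{(L-\beta)\ln r}\}$, whereas you compress all of that into the single observation that $\phi(t)=\max\{0,\min\{L,t\}\}$ is nondecreasing. Your exchange argument is a sound alternative and does not appear in the paper.
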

\begin{proof}
Refer to the Appendix \ref{Appendices p_l_W_monotonicity}.
\end{proof} 

This gives rise to a kind of restrictive water-filling, which is presented in Figure \ref{fig:paper_AL_MIM_watering}. Choose a constant $\beta$ so that $\sum\nolimits_{i=1}^n p_i l_i=T$. The storage size depends on the difference between $\beta$ and $\frac{\ln (1/W_i)}{\ln r}$. In Figure \ref{fig:paper_AL_MIM_watering}, we obtain that $\beta$ characterizes the height of water surface, and $\frac{\ln (1/W_i)}{\ln r}$ determines the bottom of the pool. Actually, no storage space is assigned to the data with this difference less than zero. When the difference is in the interval $[0,L]$, this difference is exactly the storage size. Furthermore, the storage size will be truncated to $L$ bits if the difference is larger than $L$. Compared with the conventional water-filling, the lowest height of the bottom of the pool is constricted in this restrictive water-filling.

\begin{rem}
The restrictive water-filling in Figure \ref{fig:paper_AL_MIM_watering} is summarized as follows. 
\begin{itemize}[leftmargin=*,labelsep=5.8mm]
\item   For the data with extremely small message importance, $\frac{\ln (1/W_i)}{\ln r}$ is so large that the bottom of the pool is above the water surface. Thus, the storage size of this kind of data is zero.  
\item   For the data with small message importance, $\frac{\ln (1/W_i)}{\ln r}$ is large, and therefore the bottom of the pool is high. Thus, the storage size of this kind of data is small.  
\item   For the data with large message importance, $\frac{\ln (1/W_i)}{\ln r}$ is small, and therefore the bottom of the pool is low. Thus, the storage size of this kind of data is large.
\item   For the data with extremely large message importance, $\frac{\ln (1/W_i)}{\ln r}$ is so small that the bottom of the pool is constricted in order to truncate the storage size to $L$.
\end{itemize}
\end{rem}

Thus, this optimal storage space allocation strategy is a high efficiency adaptive storage allocation algorithm for the fact that it can make rational use of all the storage space according to message importance to minimize the RWRE.
\begin{figure}[H]
  \centerline{\includegraphics[width=10.0cm]{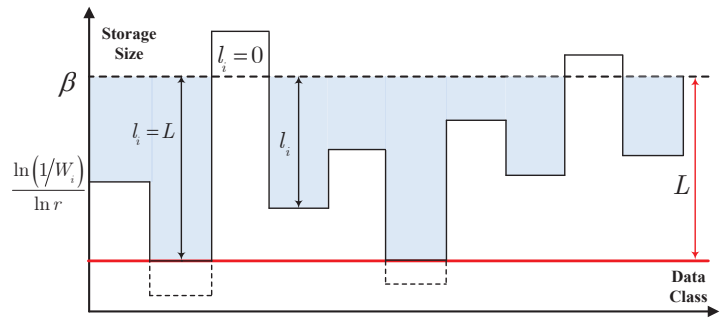}}
  \caption{Restrictive water-filling for optimal storage size.}\label{fig:paper_AL_MIM_watering}
\end{figure}

This solution can be gotten by means of recursive algorithm in practice, which is shown in Algorithm \ref{alg:recursive}, where we define an auxiliary function as
\begin{flalign} \label{equ: the optimal same storage size}
f(i,\emph{\textbf{W}},\emph{\textbf{p}},L,T,r,K_{\min},K_{\max})=\frac{T }{ \sum\nolimits_{j=K_{\min}}^{K_{\max} } p_j }+\frac{\ln W_i}{\ln r} -\frac{  {\sum\nolimits_{j=K_{\min}}^{K_{\max}}  p_{j} \ln W_{j}}   }{\ln r { \sum\nolimits_{j=K_{\min}}^{K_{\max}} p_{j}}}.
\end{flalign}

\begin{algorithm}
\caption{Storage Space Allocation Algorithm}
\label{alg:recursive}
\begin{algorithmic}[1]
\REQUIRE ~~\\
The message importance, $\emph{\textbf{W}}=\{W_i,i=1,2,...,n\}$ (Sort it to satisfy $W_1 \ge W_2 \ge ... \ge W_n$)\\
\vspace{-2mm}
The probability distribution of source, $\emph{\textbf{P}}=\{p_i, i=1, 2, ..., n\}$\\
\vspace{-2mm}
The original storage size, $L$\\
\vspace{-2mm}
The maximum available storage space, $T$\\
\vspace{-2mm}
The radix, $r$\\
\vspace{-2mm}
The auxiliary variables, $K_{\min},K_{\max}$ (Let $K_{\min}=1,K_{\max}=n$ as the original values)\\
\vspace{-2mm}
\ENSURE ~~\\
\vspace{-2mm}
The compressed code length, $\emph{\textbf{l}}=\{l_i,i=K_{\min},...,K_{\max}\}$\\
\vspace{-2mm}
Denote the following algorithm as $\{l_{K_{\min}},...,l_{K_{\max}}\}=\phi(\emph{\textbf{W}},\emph{\textbf{P}},L,T,r,K_{\min},K_{\max})$\\
\vspace{-1mm}
\STATE    ${l_i}^\prime  \leftarrow f(i,\emph{\textbf{W}},\emph{\textbf{P}},L,T,r,K_{\min},K_{\max})$ for $i=K_{\min},...,K_{\max}$ \,\quad\quad\quad\quad  $\vartriangleright$ See Equation (\ref{equ: the optimal same storage size}) \\
\vspace{-1mm}
\STATE    \textbf{if} $\forall t \in \{K_{\min},...,K_{\max}\}$ such that $0 \le l_{t}^\prime \le L$ and $\sum\nolimits_{i=K_{\min}}^{K_{\max}} p_i l_i^\prime=T $
\vspace{-1mm}
\STATE    ${l_i}\leftarrow  {l_i}^\prime $ for $i=K_{\min},...,K_{\max}$
\vspace{-1mm}
\STATE    \textbf{else} \quad \textbf{if}  $K_{\max}>K_{\min}$
\vspace{-1mm}
\STATE    \quad\quad\quad ${l_i}^{(1)}\leftarrow L$ for $i=1,..,K_{\min}-1$
\vspace{-1mm}
\STATE    \quad\quad\quad${l_i}^{(1)}\leftarrow0$ for $i=K_{\max},...,n$
\vspace{-1mm}
\STATE    \quad\quad\quad ${l_i}^{(1)}  \leftarrow \phi(\emph{\textbf{W}},\emph{\textbf{P}},L,T,r,K_{\min},K_{\max}-1)$ for $i=K_{\min},...,K_{\max}-1$ 
\vspace{-1mm}   
 \STATE \quad\quad\quad   $\epsilon^{(1)}=D_r(\emph{\textbf{W}},L,\emph{\textbf{l}}^{(1)})$  \quad\quad\quad\quad\quad\quad\quad\quad\quad\quad\quad\quad\quad\quad\quad\quad\quad  $\vartriangleright$ See Equation (\ref{equ:Def_RWRE}) 
 \vspace{-1mm}
\STATE   \quad\quad\quad ${l_i}^{(2)}\leftarrow L$ for $i=1,..,K_{\min}$ 
\vspace{-1mm}
\STATE \quad\quad\quad ${l_i}^{(2)}\leftarrow0$ for $i=K_{\max}+1,...,n$ 
\vspace{-1mm}
 \STATE \quad\quad\quad $T^\prime \leftarrow T- p_{K_{\min}} L$
 \vspace{-1mm}
 \STATE \quad \quad\quad ${l_i}^{(2)}  \leftarrow \phi(\emph{\textbf{W}},\emph{\textbf{P}},L,T^\prime,r,K_{\min}+1,K_{\max})$ for $i=K_{\min}+1,...,K_{\max}$
      \vspace{-1mm}
       \STATE \quad\quad\quad   $\epsilon^{(2)}=D_r(\emph{\textbf{W}},L,\emph{\textbf{l}}^{(2)})$  \quad\quad\quad\quad\quad\quad\quad\quad\quad\quad\quad\quad\quad\quad\quad\quad\quad  $\vartriangleright$ See Equation (\ref{equ:Def_RWRE})
       \vspace{-1mm}
\STATE    \quad\quad\quad\quad\quad \textbf{if}  $\epsilon^{(1)}\ge \epsilon^{(2)}$
\vspace{-1mm}
\STATE    \quad\quad\quad\quad\quad  \quad \quad ${l_i}\leftarrow  {l_i}^{(1)} $ for $i=1,2,...,n$ 
\vspace{-1mm}
\STATE    \quad\quad\quad\quad\quad \textbf{else}  
\vspace{-1mm}
\STATE    \quad\quad\quad\quad\quad\quad\quad ${l_i}\leftarrow  {l_i}^{(2)} $ for $i=1,2,...,n$ \\
\vspace{-1mm}
 \STATE   \quad\quad\quad\quad\quad \textbf{end}
 \vspace{-1mm}
\STATE   \quad\quad\quad \textbf{else}  \\
\vspace{-1mm}
\STATE \quad\quad\quad${l_i}\leftarrow L$ for $i=1,..,K_{\min}-1$
\vspace{-1mm}
\STATE \quad\quad\quad${l_i}\leftarrow0$ for $i=K_{\min}+1,...,n$
\vspace{-1mm}
\STATE   \quad\quad\quad    ${l_{K_{\min}}}\leftarrow (T-\sum\nolimits_{i=1}^{K_{\min}-1} p_i L) / p_{K_{\min}}$ \\
\vspace{-1mm}
\STATE \quad\quad\quad  \textbf{end}  \\
\vspace{-1mm}
 \STATE  \textbf{end}
 \vspace{-1mm}
\RETURN $l_i$ for $i=K_{\min},...,K_{\max}$
\end{algorithmic}
\end{algorithm}

\subsection{Optimal Allocation Strategy in Quantification Storage System}
\begin{cor}
For a given maximum available storage space $T$ ($T \ge 0$), when probability distribution is $(p_1,p_2,...,p_n)$ and the radix is $r$ ($r>1$), the solution of Problem $\mathcal{P}_3$ is given by
\begin{flalign} \label{equ:thm optimal strategy real number}
l_i= \left(  \frac{ \ln(\ln r) +\ln W_i  - \ln \lambda }{\ln r}  \right)^+ ,
\end{flalign}
where $\lambda$ is chosen so that $\sum\nolimits_{i=1}^n p_i l_i=T$.
\end{cor}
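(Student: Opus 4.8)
The plan is to obtain this corollary as the $L_i \to \infty$ specialization of Theorem~\ref{thm:optimal strategy general}, and to cross-check it by a direct Karush--Kuhn--Tucker argument, since Problem~$\mathcal{P}_3$ is a convex program. First I would recall, as already recorded around Equation~(\ref{equ: RWRE in NMIM}), that $\mathcal{P}_3$ is exactly $\mathcal{P}_1$ under the substitution $L_i = L \to +\infty$ for all $i$. In the middle branch of Equation~(\ref{equ:thm optimal strategy general}) the term $\ln(1-r^{-L_i})$ tends to $\ln 1 = 0$ (because $r>1$ forces $r^{-L_i}\to 0$), while the upper branch $l_i = L_i$ is never selected, since the candidate value $\tfrac{\ln\ln r + \ln W_i - \ln\lambda^*}{\ln r}$ stays finite whereas $L_i\to\infty$. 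Thus only the lower branch ($l_i<0 \Rightarrow l_i=0$) and the middle branch survive, the middle branch collapsing to $l_i = \tfrac{\ln\ln r + \ln W_i - \ln\lambda}{\ln r}$; combining the two yields the claimed $(\cdot)^+$ formula, and the normalization $\sum_i p_i l_i = T$ carries over from the constraint-satisfaction condition on $\lambda^*$.

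For a self-contained verification I would argue directly. The objective $\sum_i p_i W_i r^{-l_i}$ is convex in $(l_1,\dots,l_n)$ because each $r^{-l_i} = e^{-l_i\ln r}$ is convex, and the feasible set $\{\,\mathbf{l} : \sum_i p_i l_i \le T,\ l_i\ge 0\,\}$ is a nonempty polyhedron, so the KKT conditions are necessary and sufficient. With a multiplier $\lambda\ge 0$ for the budget and $\nu_i\ge 0$ for $-l_i\le 0$, stationarity reads $-p_i W_i (\ln r)\, r^{-l_i} + \lambda p_i - \nu_i = 0$. Since the objective is strictly decreasing in each $l_i$, the budget must be tight at the optimum, so $\sum_i p_i l_i = T$ and $\lambda>0$. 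If $l_i>0$, complementary slackness gives $\nu_i = 0$, hence $r^{-l_i} = \lambda/(W_i\ln r)$, i.e. $l_i = \tfrac{\ln\ln r + \ln W_i - \ln\lambda}{\ln r}$; this is admissible only when that expression is positive, and otherwise $l_i = 0$. This is exactly Equation~(\ref{equ:thm optimal strategy real number}).

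Finally I would check that $\lambda$ is well defined: writing $l_i(\lambda) = \bigl(\tfrac{\ln\ln r + \ln W_i - \ln\lambda}{\ln r}\bigr)^+$, the map $g(\lambda) := \sum_i p_i l_i(\lambda)$ is continuous and nonincreasing, with $g(\lambda)\to+\infty$ as $\lambda\downarrow 0$ and $g(\lambda)=0$ once $\lambda\ge W_i\ln r$ for all $i$; hence for every $T\ge 0$ there is a $\lambda$ with $g(\lambda)=T$, determined uniquely on the range where $g$ is strictly decreasing. The only delicate points are this monotonicity/continuity bookkeeping and the argument that the budget constraint is active; the rest is a routine limit of Theorem~\ref{thm:optimal strategy general}.
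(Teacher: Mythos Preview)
Your proposal is correct and follows essentially the same route as the paper: both obtain the corollary by letting $L\to\infty$ in the earlier optimal-allocation result (the paper passes through Corollary~\ref{cor:thm optimal strategy} rather than Theorem~\ref{thm:optimal strategy general} directly, but this is immaterial since $L_i=L$ anyway). Your additional direct KKT verification and the existence/monotonicity argument for $\lambda$ are not in the paper's one-line proof, so they are extra rigor rather than a different approach.
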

\begin{proof}
Let $L \to \infty$ in Corollary \ref{cor:thm optimal strategy}, the solutions in Equation (\ref{equ:thm optimal strategy}) can be simplified to Equation (\ref{equ:thm optimal strategy real number}).
\end{proof}

In fact, the optimal storage space allocation strategy in this case can be seen as a kind of water-filling, which gets rid of the constraint on the lowest height of the bottom of the pool.

\section{Property of Optimal Storage Strategy Based on Message Importance Measure}\label{sec: MIM}
Considering that the ideal storage system can capture most of characteristics of the lossy compression storage model in this paper, we focus on the property of optimal storage strategy in it in this section for ease of analyzing. Specifically, we ignore rounding and adopt $l_i$ in Equation (\ref{equ:thm optimal strategy}) as the optimal storage size of the $i$-th class in this section.
Moreover, we focus on a special kind of the importance weight.
Namely, MIM is adopted as the importance weight in this paper, for the fact that it can effectively measure the cost of the error in data reconstruction in the small-probability event scenarios \cite{liu2017non,she2019importance}. 

\subsection{Normalized MIM}
In order to facilitate comparison under different parameters, the normalized MIM is used and we can write
\begin{flalign}\label{equ: def of W}
W_i=\frac{e^{\varpi (1-p_i)}}{\sum\nolimits_{j=1}^n e^{\varpi(1-p_j)}},
\end{flalign}
where $\varpi$ is the importance coefficient.

Actually, it is easy to check that $0\le W_i \le 1$ for $i=1,2,...,n$. Moreover, it is obvious that the sum of those in all event classes is one.

\subsubsection{Positive Importance Coefficient}
For positive importance coefficient (i.e., $\varpi>0$), let $\alpha_1=\arg \mathop {\min}\limits_i p_i$ and assume $p_{\alpha_1}< p_i$ for $i \ne \alpha_1$. The derivation of it with respect to the importance coefficient is
\begin{flalign}
\frac{\partial W_{\alpha_1}}{\partial \varpi} = \frac{\sum\nolimits_{j=1}^n (p_j-p_{\alpha_1})e^{\varpi(2-p_{\alpha_1}-p_j)} }{ {\left(\sum\nolimits_{j=1}^n e^{\varpi(1-p_j)}\right)}^2 } \ge 0.
\end{flalign}
Therefore, $W_{\alpha_1}$ increases as $\varpi$ increases. In particular, as $\varpi$ approaches positive infinity, we have
\begin{flalign}\label{equ:PQ_min1}
 \mathop {\lim }\limits_{\varpi \to  + \infty} W_{\alpha_1} =&\mathop {\lim }\limits_{\varpi  \to  + \infty } \frac{e^{\varpi (1-p_{\alpha_1})}}{\sum\nolimits_{j=1}^n e^{\varpi(1-p_j)}}\\
   =&\mathop {\lim }\limits_{\varpi  \to  + \infty } \frac{e^{\varpi (1-p_{\alpha_1})}}{e^{\varpi (1-p_{\alpha_1})}+\sum\nolimits_{j \ne \alpha_1} e^{\varpi(1-p_j)}}  \tag{\theequation a}\\
  =&   \mathop {\lim }\limits_{\varpi  \to  + \infty } \frac{1}{1+\sum\nolimits_{j \ne \alpha_1} e^{\varpi(p_{\alpha_1}-p_j)}}  \tag{\theequation b}\\
  =&   1 .\tag{\theequation c} \label{equ:PQ_min1 c}
\end{flalign}
Obviously, $\mathop {\lim }\limits_{\varpi \to  + \infty} W_{i}=0$ for $i \ne \alpha_1$.
\begin{rem}\label{rem: positive Importance coefficient}
As $\varpi$ approaches positive infinity, the importance weight with the smallest probability is one and others are all zero, which means only a fraction of data almost owns almost all of the critical information that users care about in the viewpoint of this message importance.
\end{rem}
\subsubsection{Negative Importance Coefficient}
When importance coefficient is negative (i.e., $\varpi<0$), let $\alpha_2=\arg \mathop {\max}\limits_i p_i$ and assume $p_{\alpha_2}> p_i$ for $i \ne \alpha_2$. The derivation of it with respect to the importance coefficient is
\begin{flalign}
\frac{\partial W_{\alpha_2}}{\partial \varpi} = \frac{\sum\nolimits_{j=1}^n (p_j-p_{\alpha_2})e^{\varpi(2-p_{\alpha_2}-p_j)} }{ {\left(\sum\nolimits_{j=1}^n e^{\varpi(1-p_j)}\right)}^2 } \le 0.
\end{flalign}
Therefore, $W_{\alpha_1}$ decreases as $\varpi$ increases. In particular, as $\varpi$ approaches negative infinity, we have
\begin{flalign}\label{equ:PQ_min1}
 \mathop {\lim }\limits_{\varpi \to  - \infty} W_{\alpha_2} =&\mathop {\lim }\limits_{\varpi  \to  - \infty } \frac{e^{\varpi (1-p_{\alpha_2})}}{\sum\nolimits_{j=1}^n e^{\varpi(1-p_j)}}\\
   =&\mathop {\lim }\limits_{\varpi  \to  - \infty } \frac{e^{\varpi (1-p_{\alpha_2})}}{e^{\varpi (1-p_{\alpha_2})}+\sum\nolimits_{j \ne \alpha_2} e^{\varpi(1-p_j)}}  \tag{\theequation a}\\
  =&   \mathop {\lim }\limits_{\varpi  \to  - \infty } \frac{1}{1+\sum\nolimits_{j \ne \alpha_2} e^{\varpi(p_{\alpha_2}-p_j)}}  \tag{\theequation b}\\
  =&   1 .\tag{\theequation c} \label{equ:PQ_min1 c}
\end{flalign}
Obviously, $\mathop {\lim }\limits_{\varpi \to  - \infty} W_{i}=0$ for $i \ne \alpha_2$.
\begin{rem}\label{rem: negative Importance coefficient}
As $\varpi$ approaches negative infinity, the importance weight with the biggest probability is one and others are all zero. If the biggest probability is not too big, the majority of message importance can also be included in not too much data.
\end{rem}

\subsection{Optimal Storage Size for Each Class}
Assume ${\tilde N}=n$ and ignore rounding, due to Equation (\ref{equ: li_Nn}), we obtain
\begin{flalign}\label{equ: mim length}
l_i=&T+\frac{ \ln \frac{e^{\varpi (1-p_i)}}{\sum\nolimits_{j=1}^n e^{\varpi(1-p_j)}} - \sum\limits_{i=1}^n {p_i \ln \frac{e^{\varpi (1-p_i)}}{\sum\nolimits_{j=1}^n e^{\varpi(1-p_j)}}} }{\ln r}  \\
=&T+\frac{ \varpi }{\ln r}( { \gamma_p}- p_i),  \tag{\theequation a}\label{equ: mim length a}
\end{flalign}
where $\gamma_p$ is an auxiliary variable and it is given by
\begin{flalign}
\gamma_p=\sum\limits_{i=1}^n p_i^2.
\end{flalign}
In fact, its natural logarithm is the minus R{\'{e}}nyi entropy of order two, i.e., $\gamma_p = e^{-H_2(\emph{\textbf{P}})}$ where $H_2(\emph{\textbf{P}})$ is the R{\'{e}}nyi entropy $H_{\alpha}(\cdot)$ when $\alpha=2$~\cite{van2014renyi}. Furthermore, we have the following lemma on $\gamma_p$.
\begin{lem}\label{lem:p2_lower}
Let $(p_1,p_2,...,p_n)$ be a probability distribution, then we have
\begin{flalign}
\frac{1}{n} &\le \gamma_p \le 1, \\ 
-\frac{1}{4} &\le \gamma_p -p_i \le 1.   \tag{\theequation a} 
\end{flalign}
\end{lem}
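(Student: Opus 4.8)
The plan is to establish the two pairs of inequalities separately, since the second follows readily from the first. For the bounds on $\gamma_p = \sum_{i=1}^n p_i^2$, the upper bound $\gamma_p \le 1$ is immediate: each $p_i \le 1$ forces $p_i^2 \le p_i$, and summing gives $\gamma_p \le \sum_i p_i = 1$, with equality exactly when the distribution is degenerate. For the lower bound $\gamma_p \ge 1/n$, I would invoke the Cauchy--Schwarz inequality (equivalently, the power-mean inequality) in the form $\left(\sum_{i=1}^n p_i\right)^2 \le n \sum_{i=1}^n p_i^2$; since $\sum_i p_i = 1$, this rearranges to $\gamma_p \ge 1/n$, with equality precisely for the uniform distribution $p_i = 1/n$. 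Alternatively one can note $\gamma_p = \mathrm{Var}(p) + 1/n \cdot (\sum p_i)$-type identity, but Cauchy--Schwarz is the cleanest route.

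For the second pair of inequalities, $-\tfrac14 \le \gamma_p - p_i \le 1$, I would proceed as follows. The upper bound is the easier half: since $\gamma_p \le 1$ and $p_i \ge 0$, we get $\gamma_p - p_i \le \gamma_p \le 1$ directly. For the lower bound, the idea is to isolate the $i$-th term and write $\gamma_p - p_i = \left(p_i^2 - p_i\right) + \sum_{j \ne i} p_j^2$. The second summand is nonnegative, so it suffices to show $p_i^2 - p_i \ge -\tfrac14$, i.e., that the quadratic $t \mapsto t^2 - t$ is bounded below by $-\tfrac14$ on $[0,1]$ (indeed on all of $\mathbb{R}$); this is clear since $t^2 - t = (t - \tfrac12)^2 - \tfrac14 \ge -\tfrac14$, with the minimum attained at $t = \tfrac12$.

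I do not anticipate a genuine obstacle here, as all four bounds reduce to elementary one-variable estimates plus Cauchy--Schwarz. The only point requiring a little care is verifying that the bound $-\tfrac14$ in $\gamma_p - p_i \ge -\tfrac14$ is actually consistent — i.e., that a distribution achieving $p_i = \tfrac12$ with all other mass concentrated away contributes nonnegatively through the $\sum_{j\ne i}p_j^2$ term — but since that term is a sum of squares it is automatically $\ge 0$, so the decomposition argument goes through without needing the bound to be tight. If the paper later needs tightness of these bounds for its RWRE analysis, one would additionally record that $\gamma_p = 1/n$ characterizes the uniform distribution (the incompressible case), which matches the narrative in the introduction.
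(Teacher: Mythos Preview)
Your proposal is correct. The treatment of the first pair of inequalities and the upper bound $\gamma_p - p_i \le 1$ coincides with the paper's proof essentially verbatim (Cauchy--Schwarz for $\gamma_p \ge 1/n$; $p_i^2 \le p_i$ for $\gamma_p \le 1$).

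The genuine difference lies in the lower bound $\gamma_p - p_i \ge -\tfrac14$. You argue by the decomposition $\gamma_p - p_i = (p_i^2 - p_i) + \sum_{j\ne i} p_j^2$ and the completed-square estimate $t^2 - t \ge -\tfrac14$, which is a one-line elementary argument. The paper instead handles $n=1,2$ by hand and then, for $n\ge 3$, sets up a Lagrange-multiplier optimization of $\sum_j p_j^2 - p_i$ subject to $\sum_j p_j = 1$; it solves for the extremal distribution $p_i^* = (n+1)/(2n)$, $p_j^* = 1/(2n)$ for $j\ne i$, computes the minimum value $(n - n^2)/(4n^2) = (1-n)/(4n)$, and observes this is $\ge -\tfrac14$. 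Your route is shorter and avoids any case analysis; the paper's route, while heavier, additionally yields the exact minimum for each fixed $n$ and the extremizing distribution, information that your argument does not supply (though the paper does not appear to use this sharper information elsewhere).
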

\begin{proof}
Refer to the Appendix \ref{Appendices A}.
\end{proof} 
Thus, we find $l_i>T$ if $(1/n-p_i)\varpi>0$. Besides, we obtain $l_i=T$ when $p_i= \gamma_p$.

\begin{thm}\label{thm:p_l_monotonicity}
Let $(p_1,p_2,...,p_n)$ be a probability distribution and $W_i={e^{\varpi (1-p_i)}}/{\sum\nolimits_{j=1}^n e^{\varpi(1-p_j)}}$ be importance weight. The optimal storage size in ideal storage system has the following properties:
\begin{enumerate}[leftmargin=*,labelsep=4.9mm]
  \item[(1)] $l_i\ge l_j$ if $p_i< p_j$ for $\forall i,j \in \{1,2,...,n\}$ when $\varpi>0$;
  \item[(2)] $l_i\le l_j$ if $p_i< p_j$ for $\forall i,j \in \{1,2,...,n\}$ when $\varpi<0$.
\end{enumerate}
\end{thm}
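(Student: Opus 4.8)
The plan is to obtain both claims as an immediate consequence of Theorem~\ref{thm:p_l_W_monotonicity}, which already says that the optimal storage sizes in Problem~$\mathcal{P}_2$ are monotone in the importance weights: $l_i \ge l_j$ whenever $W_i > W_j$ (with $L$, $r$ fixed, $r>1$). The only extra ingredient needed is the monotonicity of the normalized MIM weight $W_i = e^{\varpi(1-p_i)}/\sum_{j=1}^n e^{\varpi(1-p_j)}$ as a function of $p_i$. Since the denominator $\sum_{j=1}^n e^{\varpi(1-p_j)}$ does not depend on $i$, the ordering of the $W_i$ is controlled entirely by the ordering of the exponents $\varpi(1-p_i)$, and the map $p \mapsto e^{\varpi(1-p)}$ is strictly decreasing when $\varpi>0$ and strictly increasing when $\varpi<0$.

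Concretely, I would fix $i,j$ with $p_i < p_j$ and split on the sign of $\varpi$. When $\varpi>0$, multiplying the inequality $1-p_i > 1-p_j$ by $\varpi$ preserves it, so $e^{\varpi(1-p_i)} > e^{\varpi(1-p_j)}$, hence $W_i > W_j$; Theorem~\ref{thm:p_l_W_monotonicity} then yields $l_i \ge l_j$, which is part~(1). When $\varpi<0$, multiplying by $\varpi$ reverses the inequality, so $W_i < W_j$, i.e.\ $W_j > W_i$; applying Theorem~\ref{thm:p_l_W_monotonicity} with the roles of $i$ and $j$ interchanged gives $l_j \ge l_i$, which is part~(2).

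As a consistency check, and as an alternative argument valid in the non-truncated regime $\tilde N = n$, I would also record that the closed form~(\ref{equ: mim length a}) gives $l_i - l_j = \tfrac{\varpi}{\ln r}\,(p_j - p_i)$; here $r>1$ forces $\ln r > 0$ and $p_i < p_j$ forces $p_j - p_i > 0$, so the sign of $l_i - l_j$ equals the sign of $\varpi$, reproducing both statements directly. I do not anticipate any real obstacle: the theorem is essentially a corollary. The only point deserving mild care is that Theorem~\ref{thm:p_l_W_monotonicity} pairs a strict hypothesis $W_i > W_j$ with a weak conclusion $l_i \ge l_j$ — the water-filling truncation to $[0,L]$ can collapse a strict gap in the weights into equal storage sizes — so the present conclusions are correctly stated as weak inequalities, and the proof should keep them that way rather than claiming strict monotonicity.
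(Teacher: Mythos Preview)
Your proposal is correct and matches the paper's own proof essentially line for line: the paper also fixes $p_i<p_j$, observes that $W_i>W_j$ when $\varpi>0$ (respectively $W_i<W_j$ when $\varpi<0$) because the common denominator plays no role, and then invokes Theorem~\ref{thm:p_l_W_monotonicity}. Your additional consistency check via the closed form~(\ref{equ: mim length a}) is not in the paper, but it is a harmless and correct extra.
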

\begin{proof}
Refer to the Appendix \ref{Appendices p_l_monotonicity}.
\end{proof} 
\begin{rem}
Due to \cite{she2019importance}, the data with smaller probability usually possesses larger importance when $\varpi>0$, while the data with larger probability usually possesses larger importance when $\varpi<0$.
Therefore, this optimal allocation strategy makes rational use of all the storage space by providing more storage size for paramount data and less storage size for insignificance data. It agrees with the intuitive idea, which is that users generally are more concerned about the data that they need rather than the whole data itself.
\end{rem}

\begin{lem}\label{lem:p1_N}
Let $(p_1,p_2,...,p_n)$ be a probability distribution and $r$ be radix. $L$ and $T$ are integers, and $T<L$. If $\varpi$ meets $0 \le T+{ \varpi(\gamma_p- p_i) }/{\ln r} \le L$, then we have ${\tilde N}=n$.
\end{lem}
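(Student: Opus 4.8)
The plan is to show that the closed-form candidate $\hat l_i = T + \frac{\varpi}{\ln r}(\gamma_p - p_i)$ from Equation (\ref{equ: mim length a}) is in fact the exact optimal solution of Problem $\mathcal{P}_2$ whenever the stated hypothesis holds, so that no class gets clamped to $0$ or to $L$, and hence, by the bookkeeping of Remark \ref{rem: def of Ij Tj}, $\{I_j\}=\{1,\dots,n\}$, i.e. $\tilde N = n$. First I would record that $\mathcal{P}_2$ minimizes the strictly convex objective $\sum_i p_i W_i r^{-l_i}$ (each summand has second derivative $(\ln r)^2 p_i W_i r^{-l_i}>0$) over a convex feasible set, so its minimizer is unique; by Corollary \ref{cor:thm optimal strategy} this minimizer is the restrictive water-filling $l_i = \min\{L,(\beta - \ln(1/W_i)/\ln r)^+\}$ for the unique level $\beta$ making $\sum_i p_i l_i = T$. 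Uniqueness of $\beta$ follows because $\sum_i p_i l_i$ is continuous and nondecreasing in $\beta$ and sweeps $[0,L]$, while $0\le T\le L$ is forced by $\sum_i p_i \hat l_i = T$ together with the hypothesis $0\le\hat l_i\le L$ (the standing assumptions $T<L$ and $T,L$ integer merely keep the Corollary \ref{cor:thm optimal strategy} setting consistent and play no further role).

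Next I would check that $\hat l_i$ already has this water-filling shape. Inserting $W_i = e^{\varpi(1-p_i)}/\sum_j e^{\varpi(1-p_j)}$ from Equation (\ref{equ: def of W}) into $\ln(1/W_i)/\ln r$ shows that $\hat l_i + \ln(1/W_i)/\ln r$ does not depend on $i$; call this common value $\hat\beta$, so $\hat l_i = \hat\beta - \ln(1/W_i)/\ln r$ for every $i$. A one-line computation using $\sum_i p_i = 1$ and $\sum_i p_i^2 = \gamma_p$ gives $\sum_i p_i \hat l_i = T$. Since $0\le\hat l_i\le L$ by hypothesis, we may rewrite $\hat l_i = \min\{L,(\hat\beta - \ln(1/W_i)/\ln r)^+\}$, i.e. $(\hat l_i)$ is exactly the water-filling profile at level $\hat\beta$ and it satisfies $\sum_i p_i\hat l_i = T$. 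By the uniqueness established above, $(\hat l_i)$ must coincide with the optimal solution of $\mathcal{P}_2$ and $\hat\beta$ with the water level $\beta$ of Corollary \ref{cor:thm optimal strategy}; as every $\hat l_i$ lies in $[0,L]$, none of the indices falls into the ``clamped'' families of Remark \ref{rem: def of Ij Tj}, whence $\tilde N = n$.

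The one genuinely delicate point is a mild circularity that must be resolved: the formula $\hat l_i = T + \frac{\varpi}{\ln r}(\gamma_p - p_i)$ was itself obtained in Equation (\ref{equ: mim length a}) \emph{under the a priori assumption} $\tilde N = n$, so the crux of the argument is to close the loop — to argue that feasibility of this formula (namely $0\le\hat l_i\le L$), combined with strict convexity and the water-filling characterization of the unique minimizer, forces the formula to be the actual optimum and thereby retroactively validates $\tilde N = n$. Everything else — the $i$-independence that produces $\hat\beta$, and the identity $\sum_i p_i\hat l_i = T$ — is routine and relies on nothing beyond $\sum_i p_i = 1$ and $\sum_i p_i^2 = \gamma_p$.
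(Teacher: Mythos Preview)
Your proposal is correct and, at its core, follows the same idea as the paper's proof: once the candidate $\hat l_i = T + \varpi(\gamma_p - p_i)/\ln r$ lies in $[0,L]$ for every $i$, no clamping occurs and $\tilde N = n$. The paper's own proof is just that one sentence, so you have in fact done more than the paper: you explicitly close the apparent circularity (the formula in Equation~(\ref{equ: mim length a}) was derived \emph{assuming} $\tilde N = n$) by invoking strict convexity of $\sum_i p_i W_i r^{-l_i}$ to get uniqueness of the minimizer, and then verifying that $(\hat l_i)$ is a feasible KKT point of the water-filling form with level $\hat\beta$ and budget $\sum_i p_i\hat l_i = T$, hence must coincide with the optimal solution of Corollary~\ref{cor:thm optimal strategy}. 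That extra justification is precisely what the paper's terse argument leaves implicit.
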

\begin{proof}
According to Equation (\ref{equ: mim length a}) and constraint $0 \le T+{ \varpi(\gamma_p- p_i) }/{\ln r} \le L$, we obtain $0\le l_i \le L$ for $\forall i \in \{1,2,...,n\}$. In this case, ${\tilde N}=n$.
\end{proof}

In fact, when $\varpi \ge 0$, due to Equation (\ref{equ: mim length a}) and Lemma \ref{lem:p2_lower}, we obtain
\begin{flalign}\label{equ: w L T 1}
0\le T-\frac{ \varpi }{4\ln r}  \le T+\frac{ \varpi(\gamma_p- p_i) }{\ln r}&\le T+\frac{ \varpi }{\ln r} \le L.
\end{flalign}
Similarly, when $\varpi < 0$, we have
\begin{flalign}\label{equ: w L T 2}
0\le T+\frac{ \varpi }{\ln r}  \le T+\frac{ \varpi(\gamma_p- p_i) }{\ln r}&\le T-\frac{ \varpi }{4\ln r} \le L.
\end{flalign}
According to Equation (\ref{equ: w L T 1}) and Equation (\ref{equ: w L T 2}), we find ${\tilde N}=n$ always holds If $\max(4\ln r (T-L), -T/\ln r) \le \varpi \le \min(4T \ln r, \ln r(L-T))$.

\subsection{Relative Weighted Reconstruction Error} 
For convenience, $D({\emph{\textbf{x}}},\varpi)$ is used to denote $D({\emph{\textbf{x}}},{\emph{\textbf{W}}})$. Due to Equation (\ref{equ: initial_Dr_L}), we have
\begin{flalign}\label{equ: Dr_all}
D_r({\emph{\textbf{x}}},\varpi)=\frac{1}{r^L-1} \left( {  \frac{\sum\nolimits_{i=1}^n  p_i e^{\varpi(1-p_i)} r^{L-l_i} }  {\sum\nolimits_{i=1}^n  p_i e^{\varpi(1-p_i)}  }-1    }\right) .
\end{flalign}
If $T$ is zero, then we will have $l_i=0$ for $i=1,2,...,n$. In this case, $D_r({\emph{\textbf{x}}},\varpi)=1$. On the contrary, $D_r({\emph{\textbf{x}}},\varpi)=0$ when $l_i=L$ for $i=1,2,...,n$.

\begin{thm}\label{thm: monotonicity of Dr}
$D_r({{\textbf{x}}},\varpi)$ has the following properties:
\begin{enumerate}[leftmargin=*,labelsep=4.9mm]
  \item[(1)] $D_r({{\textbf{x}}},\varpi)$ is monotonically decreasing with $\varpi$ in $(0, +\infty)$;
  \item[(2)] $D_r({{\textbf{x}}},\varpi)$ is monotonically increasing with $\varpi$ in $(-\infty, 0)$;
  \item[(3)] $D_r({{\textbf{x}}},\varpi) \le D_r({{\textbf{x}}},0)=(r^{L-T}  -1)/(r^L-1) $.
\end{enumerate}
\end{thm}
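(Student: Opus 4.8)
The plan is to reduce $D_r(\textbf{x},\varpi)$ to a one–line closed form and then analyse a single scalar function. Working in the regime where $\tilde N=n$ (so that Equation~(\ref{equ: mim length a}) gives $l_i=T+\frac{\varpi}{\ln r}(\gamma_p-p_i)$ for every $i$; the range of $\varpi$ for which this holds was pinned down after Equations~(\ref{equ: w L T 1})--(\ref{equ: w L T 2})), I first observe that $r^{L-l_i}=r^{L-T}e^{-\varpi(\gamma_p-p_i)}$, hence $p_i e^{\varpi(1-p_i)}r^{L-l_i}=r^{L-T}p_i e^{\varpi(1-\gamma_p)}$ and the $i$-dependence in the exponent cancels. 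Summing over $i$ and substituting into Equation~(\ref{equ: Dr_all}) collapses the ratio to
\begin{flalign}
D_r(\textbf{x},\varpi)=\frac{1}{r^L-1}\left(r^{L-T}g(\varpi)-1\right),\qquad g(\varpi)=\frac{e^{-\varpi\gamma_p}}{\sum\nolimits_{i=1}^n p_i e^{-\varpi p_i}}.
\end{flalign}
Since $r>1$ gives $r^{L-T}/(r^L-1)>0$, all three claims become statements about $g$ alone: $g$ is decreasing on $(0,+\infty)$, increasing on $(-\infty,0)$, and $g(\varpi)\le g(0)$.

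Next I study $h(\varpi):=\ln g(\varpi)=-\varpi\gamma_p-\ln\big(\sum\nolimits_{i=1}^n p_i e^{-\varpi p_i}\big)$. Then $h'(\varpi)=-\gamma_p+B(\varpi)/A(\varpi)$ with $A(\varpi)=\sum\nolimits_i p_i e^{-\varpi p_i}$ and $B(\varpi)=\sum\nolimits_i p_i^2 e^{-\varpi p_i}$; at $\varpi=0$ this equals $-\gamma_p+\sum_i p_i^2=0$ because $\sum_i p_i=1$. Setting also $C(\varpi)=\sum\nolimits_i p_i^3 e^{-\varpi p_i}$ (so $A'=-B$, $B'=-C$) yields $h''(\varpi)=(B^2-AC)/A^2$, and the Cauchy--Schwarz inequality applied to the vectors with components $\sqrt{p_i}\,e^{-\varpi p_i/2}$ and $p_i^{3/2}e^{-\varpi p_i/2}$ gives $B^2\le AC$, so $h''\le 0$. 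Thus $h$ is concave; a concave function with $h'(0)=0$ satisfies $h'\ge 0$ on $(-\infty,0)$ and $h'\le 0$ on $(0,+\infty)$ (strictly, unless all $p_i$ coincide, which is the incompressible uniform case). Hence $h$, and therefore $g=e^{h}$, is increasing on $(-\infty,0)$, decreasing on $(0,+\infty)$, with a global maximum at $\varpi=0$.

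Finally I assemble the conclusions. The positive affine map $g\mapsto\frac{1}{r^L-1}(r^{L-T}g-1)$ transfers the monotonicity of $g$ verbatim to $D_r$, giving parts (1) and (2). Evaluating at $\varpi=0$ gives $g(0)=1/\sum_i p_i=1$, hence $D_r(\textbf{x},0)=(r^{L-T}-1)/(r^L-1)$; since $\varpi=0$ maximises $g$ it maximises $D_r$, which is part (3). The one delicate point — and what I expect to be the real work — is justifying the reduction step, i.e. that every $l_i$ is interior ($\tilde N=n$), which constrains $\varpi$ via Lemma~\ref{lem:p1_N} and Equations~(\ref{equ: w L T 1})--(\ref{equ: w L T 2}); to cover $\varpi$ outside that range one would have to track the truncated classes, each of which contributes a value of $r^{L-l_i}$ frozen at $1$ or $r^{L}$, and argue that such truncations do not destroy the monotonicity or move the maximiser away from $\varpi=0$.
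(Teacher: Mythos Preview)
Your argument is correct and quite elegant for the interior regime $\tilde N=n$, but it does not match the paper and leaves a real gap. The theorem is stated for all $\varpi\in\mathbb{R}$, and the paper's proof (Appendix~\ref{Appendices C}) covers the truncated case from the start: it differentiates the full ratio $f(\varpi)=\sum_i p_i e^{\varpi(1-p_i)}r^{-l_i}/\sum_j p_j e^{\varpi(1-p_j)}$ with the piecewise $l_i$ and splits the numerator of $f'(\varpi)$ into $F_1+F_2$, where $F_2$ collects the $l_i'$-terms. A direct computation gives $F_2=0$, and for $F_1$ the paper uses a symmetrisation (pairing $(i,j)$ with $(j,i)$) combined with Theorem~\ref{thm:p_l_monotonicity} (which says $p_i<p_j\Rightarrow l_i\ge l_j$ for $\varpi>0$, and the reverse for $\varpi<0$) to show $F_1\le 0$ (resp.\ $\ge 0$). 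This argument works regardless of how many $l_i$ are pinned at $0$ or $L$, so it genuinely covers the whole line.

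By contrast, your route---collapsing to $g(\varpi)=e^{-\varpi\gamma_p}/\sum_i p_i e^{-\varpi p_i}$ and proving concavity of $\ln g$ via Cauchy--Schwarz---is cleaner and more transparent where it applies, and it makes the role of $\gamma_p=\sum_i p_i^2$ explicit. But the collapse $p_i e^{\varpi(1-p_i)}r^{L-l_i}=r^{L-T}p_i e^{\varpi(1-\gamma_p)}$ relies on every $l_i$ following Equation~(\ref{equ: mim length a}); once a class is truncated, its term is frozen at $p_i e^{\varpi(1-p_i)}$ or $p_i e^{\varpi(1-p_i)}r^L$ and the $i$-dependence no longer cancels, so the single-variable reduction breaks down. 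Your final paragraph acknowledges this honestly, but the sketch you give for handling it (``argue that such truncations do not destroy the monotonicity'') is exactly the hard part, and the paper's symmetrisation-plus-Theorem~\ref{thm:p_l_monotonicity} argument is precisely the mechanism that resolves it. If you want to complete your approach you will need either that mechanism or an equivalent, since concavity of $\ln g$ alone no longer controls the truncated terms.
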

\begin{proof}
Refer to the Appendix \ref{Appendices C}.
\end{proof} 

\begin{rem}
As shown in Remark \ref{rem: positive Importance coefficient} and Remark \ref{rem: negative Importance coefficient}, the overwhelming majority of important information will gather in a fraction of data as the importance coefficient increases to negative/positive infinity.
Therefore, we can heavily reduce the storage space with extremely small of RWRE with the increasing of the absolute value of importance coefficient.
In fact, this special characteristic of weight reflects the effect of users' preference.
That is, it is beneficial for data compression that the data that users care about is highly clustered.
Moreover, when $\varpi=0$, all the importance weight is the same, which leads to the incompressibility for the fact that there is no special characteristic of weight for users to make rational use of storage space.
\end{rem}

In the following part of this section, we will discuss the case where $0 \le T+{ \varpi(\gamma_p- p_i) }/{\ln r} \le L$, which means all $l_i$ can be given by Equation (\ref{equ: mim length a}) and $n=\tilde N$ due to Lemma \ref{lem:p1_N}. In this case, substituting Equation (\ref{equ: mim length a}) in Equation (\ref{equ: initial_Dr_L}), the RWRE is
\begin{flalign}\label{equ:weighted reconstruction error}
D_r({\emph{\textbf{x}}},\varpi)=\frac{   e^{\varpi(1-\gamma_p)}r^{\Delta}     }  { (r^L-1)\sum\nolimits_{i=1}^n  {p_i e^{\varpi (1-p_i)}}  }-\frac{1}{r^L-1},
\end{flalign}
where $\Delta=L-T$, which characterizes the actual compressed storage space.

Since that $0 \le T+{ \varpi(\gamma_p- p_i) }/{\ln r} \le L$, we have
\begin{flalign}
\frac{ \varpi(\gamma_p-p_{\alpha_1}) }{ \ln r}\le L-T\le L-\frac{\varpi(p_{\alpha_2}-\gamma_p)}{\ln r} 
\end{flalign}
Hence, 
\begin{flalign}\label{equ: delta12}
\delta_1=\frac{   e^{\varpi(1-p_{\alpha_1})}     }  { (r^L-1)\sum\nolimits_{i=1}^n  {p_i e^{\varpi (1-p_i)}}  }-\frac{1}{r^L-1} \le D_r({\emph{\textbf{x}}},\varpi) \le \frac{   e^{\varpi(1-p_{\alpha_2})}r^{L}     }  { (r^L-1)\sum\nolimits_{i=1}^n  {p_i e^{\varpi (1-p_i)}}  }-\frac{1}{r^L-1}=\delta_2.
\end{flalign}

\begin{thm}\label{thm: max delta}
For a given storage system with the probability distribution of data sequence ${\textbf{P}}=(p_1,p_2,...,p_n)$, let $L$, $r$ be fixed positive integers ($r>1$), and $\varpi$ meets $0 \le T+{ \varpi(\gamma_p- p_i) }/{\ln r} \le L$ for $i=1,2,...,n$.
For giving upper bound of the RWRE $\delta$ ($\delta_1 \le \delta \le \delta_2$ where $\delta_1$ and $\delta_1$ is defined in Equation (\ref{equ: delta12})), the maximum available compressed storage size $\Delta^*(\delta)$ is given by
\begin{flalign}\label{equ:max delta}
\Delta^*(\delta)&=\frac{\ln \left(1+\delta (r^L-1)\right)+L(\varpi,{\textbf{P}}) -\varpi+\varpi \gamma_p  }{\ln r}  \\
         &\ge \frac{\ln \left(1+\delta (r^L-1)\right)  }{\ln r},  \tag{\theequation a} \label{equ:max delta a}
\end{flalign}
where $L(\varpi,{\textbf{P}})=\ln \sum\nolimits_{i=1}^n {p_i e^{\varpi (1-p_i)}}$, and the equality of (\ref{equ:max delta a}) holds if the probability distribution of data sequence is uniform distribution or the importance coefficient is zero.
\end{thm}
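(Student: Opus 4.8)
The plan is to solve for $\Delta^*(\delta)$ directly from the RWRE formula in Equation~(\ref{equ:weighted reconstruction error}), which is valid in the regime $0 \le T+{ \varpi(\gamma_p- p_i) }/{\ln r} \le L$ by Lemma~\ref{lem:p1_N}. First I would observe that in Equation~(\ref{equ:weighted reconstruction error}) the quantity $D_r({\emph{\textbf{x}}},\varpi)$ is a strictly increasing affine function of $r^{\Delta}$, since the coefficient $e^{\varpi(1-\gamma_p)}/\big((r^L-1)\sum_{i=1}^n p_i e^{\varpi(1-p_i)}\big)$ is positive. Hence for a prescribed ceiling $\delta$ on the RWRE, the largest admissible value of $\Delta$ is attained when $D_r({\emph{\textbf{x}}},\varpi)=\delta$ exactly. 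Setting the right-hand side of~(\ref{equ:weighted reconstruction error}) equal to $\delta$ and rearranging gives
\begin{flalign*}
e^{\varpi(1-\gamma_p)} r^{\Delta} = \big(1+\delta(r^L-1)\big)\sum\nolimits_{i=1}^n p_i e^{\varpi(1-p_i)} = \big(1+\delta(r^L-1)\big)\, e^{L(\varpi,{\textbf{P}})}.
\end{flalign*}
Taking $\log_r$ of both sides and isolating $\Delta$ yields exactly Equation~(\ref{equ:max delta}); the term $-\varpi+\varpi\gamma_p$ comes from moving $\ln e^{\varpi(1-\gamma_p)}=\varpi(1-\gamma_p)$ to the other side and dividing by $\ln r$. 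I would also remark that the condition $\delta_1\le\delta\le\delta_2$ from Equation~(\ref{equ: delta12}) is precisely what guarantees the resulting $\Delta^*(\delta)$ lies in the feasible window, so that the substitution of Equation~(\ref{equ: mim length a}) used to derive~(\ref{equ:weighted reconstruction error}) remains legitimate.

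For the inequality~(\ref{equ:max delta a}), I would compare $\Delta^*(\delta)$ with $\ln(1+\delta(r^L-1))/\ln r$; the difference is $\big(L(\varpi,{\textbf{P}})-\varpi+\varpi\gamma_p\big)/\ln r$, so it suffices to show $L(\varpi,{\textbf{P}}) \ge \varpi(1-\gamma_p)$, i.e. $\ln\sum_{i=1}^n p_i e^{\varpi(1-p_i)} \ge \varpi\sum_{i=1}^n p_i(1-p_i)$ (using $\sum_i p_i=1$ and $\gamma_p=\sum_i p_i^2$). This is an immediate application of Jensen's inequality: since $e^x$ is convex and $(p_1,\dots,p_n)$ is a probability vector, $\sum_i p_i e^{\varpi(1-p_i)} \ge e^{\sum_i p_i \varpi(1-p_i)} = e^{\varpi(1-\gamma_p)}$; taking logarithms gives the claim. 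Equality in Jensen holds iff the exponents $\varpi(1-p_i)$ are all equal, which happens iff $\varpi=0$ or all $p_i$ are equal (the uniform distribution), giving exactly the stated equality condition.

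The main obstacle is not the algebra — which is routine — but making sure the domain bookkeeping is airtight: one must verify that within the hypothesized range of $\varpi$, the value $\Delta = \Delta^*(\delta)$ produced by the inversion actually satisfies $\Delta \le L$ (equivalently $T\ge 0$) and $\Delta\ge 0$, and more importantly that each $l_i$ stays in $[0,L]$ so that Lemma~\ref{lem:p1_N} applies and Equation~(\ref{equ:weighted reconstruction error}) is the correct expression for the RWRE rather than the piecewise form from Corollary~\ref{cor:thm optimal strategy}. I would handle this by noting that $D_r$ ranges continuously and monotonically over $[\delta_1,\delta_2]$ as $\Delta$ ranges over the feasible interval bounded in Equation~(\ref{equ: delta12}), so choosing $\delta$ in $[\delta_1,\delta_2]$ forces the inverted $\Delta^*(\delta)$ back into that same interval by monotonicity; this is the one place where the stated hypothesis $\delta_1\le\delta\le\delta_2$ is genuinely used, and I would state it explicitly rather than leave it implicit.
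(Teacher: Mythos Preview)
Your proposal is correct and follows essentially the same approach as the paper's own proof: invert Equation~(\ref{equ:weighted reconstruction error}) to obtain~(\ref{equ:max delta}), then apply Jensen's inequality to the convex exponential to get~(\ref{equ:max delta a}) with the stated equality condition. Your treatment is in fact slightly more careful than the paper's, since you explicitly justify via monotonicity why the maximum $\Delta$ is attained at equality and you spell out the role of the hypothesis $\delta_1\le\delta\le\delta_2$ in keeping $\Delta^*(\delta)$ in the feasible window, points the paper leaves implicit.
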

\begin{proof}
It is easy to check that ${\tilde N}=n$ according to Lemma \ref{lem:p1_N} for the fact that $0 \le T+{ \varpi(\gamma_p- p_i) }/{\ln r} \le L$.
Let $D({\emph{\textbf{x}}},\varpi) \le \delta$. By means of Equation (\ref{equ:weighted reconstruction error}), we solve this inequality and obtain
\begin{flalign} 
\Delta \le \frac{\ln \left(1+\delta (r^L-1)\right)+L(\varpi,\emph{\textbf{p}}) -\varpi+\varpi \gamma_p  }{\ln r}=\Delta^*(\delta) ,
\end{flalign}
where $L(\varpi,\emph{\textbf{p}})=\ln \sum\nolimits_{i=1}^n {p_i e^{\varpi (1-p_i)}}$. Then we have the following inequality:
\begin{flalign}
\Delta^*(\delta)  \overset{(a)}  \ge\frac{\ln \left(1+\delta (r^L-1)\right)+\ln  {e^{\sum\nolimits_{i=1}^n p_i \varpi (1-p_i)}} -\varpi+\varpi \gamma_p  }{\ln r}   \nonumber  
 =\frac{\ln \left(1+\delta (r^L-1)\right)  }{\ln r},  
\end{flalign}
where $(a)$ follows from Jensen's inequality. Since the exponential function is strictly convex, the equality holds only if $\varpi(1-p_i)$ is constant everywhere, which means $(p_1,p_2,...,p_n)$ is uniform distribution or importance coefficient $\varpi$ is zero.
\end{proof} 

\begin{rem}
In conventional source coding, the encoding length depends on the entropy of sequence, and a sequence is incompressible if its probability distribution is uniform distribution \cite{Elements}. In Theorem \ref{thm: max delta}, the uniform distribution is also worst case, since the system achieves the minimum compressed storage size. Although the focus is different, they both show that the uniform distribution is detrimental for compression.
\end{rem}

Furthermore, it is also noted that
\begin{flalign}\label{equ:max delta 2}
\Delta^*(\delta)\le \Delta^*(\delta_2)=L+\frac{\varpi(\gamma_p-p_{\alpha_2})}{\ln r} \le L,
\end{flalign}
for the fact that $\gamma_p \le p_{\alpha_2} $. In order to make $\Delta^*(\delta_2)$ approaches $L$, $\gamma_p-p_{\alpha_2}$ should be as close to zero as possible in the range which $0 \le T+{ \varpi(\gamma_p- p_i) }/{\ln r} \le L$ for $i=1,2,...,n$ holds. 

When the importance coefficient is constant, for two probability distributions $\emph{\textbf{P}}$ and $\emph{\textbf{Q}}$, if $L(\varpi,\emph{\textbf{P}}) +\varpi \gamma_p>L(\varpi,\emph{\textbf{Q}}) +\varpi \gamma_q$, then we will obtain $\Delta^*$ in $\emph{\textbf{P}}$ is larger than that in $\emph{\textbf{Q}}$.
In fact, $L(\varpi,\emph{\textbf{p}})$ is defined as MIM in~\cite{fan2016message}, and $\gamma_p = e^{-H_2(\emph{\textbf{P}})}$~\cite{van2014renyi}. Thus, the maximum available compressed storage size is under the control of MIM and R{\'{e}}nyi entropy of order two.
For typical small-probability event scenarios where there is a exceedingly small probability, the MIM is usually large, and $\gamma_p$ is also not small simultaneously with big probability. Therefore, $\Delta^*(\delta)$ is usually large in this case. As a result, much more compressed storage space can be gotten in typical small-probability event scenarios while compared to that in uniform probability distribution. Namely, the data can compressed by means of the characteristic of the typical small-probability events, which may help to improve the design of practical storage systems in big data.

\section{Property of Optimal Storage Strategy Based on Non-parametric Message Importance Measure}\label{sec: NMIM}
In this section, we define the importance weight based on the form of non-parametric message importance measure (NMIM) to characterize the RWRE~\cite{liu2017non}. Then, the importance weight in this section is given by
\begin{flalign}\label{equ: W in NMIM}
W_i=\frac{e^{{(1-p_i)}/{p_i}}}{  \sum\nolimits_{j=1}^n e^{{(1-p_j)}/{p_j}}  }.
\end{flalign}

Due to Equation (\ref{equ: l_star_floor}), the optimal storage size in ideal storage system by this importance weight is given by
\begin{flalign}
l^*_i &= \min \left( {\left\lfloor \frac{T-T_{N_L}}{ \sum\limits_{j=1}^{\tilde N} p_{I_j}} +\frac{ 1 }{p_i\ln r} -\frac{1}{\ln r}-\frac{\ln \sum\limits_{j=1}^n e^{{(1-p_j)}/{p_j}}}{  \ln r}   -\frac{  {\sum\limits_{j=1}^{\tilde N}  (1-p_{I_j} -p_{I_j}\ln \sum\limits_{j=1}^n e^{{(1-p_j)}/{p_j}} )    }  }{\ln r { \sum\limits_{j=1}^{\tilde N} p_{I_j}} }  \right\rfloor}^+, L\right)  \nonumber\\
&=\min \left( {\left\lfloor \frac{T-T_{N_L}}{ \sum\limits_{j=1}^{\tilde N} p_{I_j}} +\frac{ 1  }{p_i\ln r} -\frac{  {\tilde N       }  }{\ln r { \sum\limits_{j=1}^{\tilde N} p_{I_j}} }  \right\rfloor}^+, L\right) .
\end{flalign}

For two probabilities $p_i$ and $p_j$, if $p_i<p_j$, then we will have $W_i>W_j$. Thus, we obtain $l_i^* \ge l_j^*$ according to Theorem \ref{thm:p_l_W_monotonicity}.

Assume ${\tilde N}=n$ and ignore rounding, due to Equation (\ref{equ: li_Nn}), we obtain
\begin{flalign}
l^*_i= T+\frac{1}{p_i \ln r}-\frac{n}{\ln r}.
\end{flalign}
Let $0\le l_i \le L$, we find
\begin{flalign} \label{equ:NMIM l-p}
\frac{1}{n+(L-T)\ln r}   \le p_i \le &\left\{
   \begin{aligned}
 &\frac{1}{n-T\ln r} \quad \textrm{if} \,\,\,   n>T \ln r.\\
 &1 \,\,\,\,\,\,\quad\quad\quad\quad \textrm{if} \,\,\, n\le T \ln r.\\
   \end{aligned}
   \right. 
\end{flalign}
Generally, this constraint does not invariably hold, and therefore we usually do not have ${\tilde N}=n$. 

Substituting Equation (\ref{equ: W in NMIM}) in Equation (\ref{equ: RWRE in NMIM}), the RWRE is given by
\begin{flalign}
D_r({\emph{\textbf{x}}},{\emph{\textbf{W}}})=\frac{  \sum\nolimits_{i=1}^n  p_i e^{(1-p_i)/p_i} {r^{-l_i}}   }{ \sum\nolimits_{i=1}^n p_i e^{(1-p_i)/p_i} }.
\end{flalign}

For the quantification storage system as shown in $\mathcal{P}_3$ in this section, if the maximum available storage size satisfies $n \le T \ln r $, arbitrary probability distribution will make Equation (\ref{equ:NMIM l-p}) hold, which means ${\tilde N}=n$. In this case, the RWRE can be expressed as
\begin{flalign}\label{equ: Dr NMIM}
D_r({\emph{\textbf{x}}},{\emph{\textbf{W}}})=  e^{n-1-\mathcal{L}({\textbf{\emph{P}}})} r^{-T}     ,
\end{flalign}
where $\mathcal{L}({\textbf{\emph{P}}})=\ln \sum\nolimits_{i=1}^n p_i e^{(1-p_i)/p_i} $, which is defined as the NMIM~\cite{liu2017non}.

It is noted that $D_r({\emph{\textbf{x}}},{\emph{\textbf{W}}})=0$ as $T$ approaches positive infinity.
Since $n \le T \ln r $, we find $D_r({\emph{\textbf{x}}},{\emph{\textbf{W}}})  \le r^{-1-\mathcal{L}({\textbf{\emph{P}}})}$.
Furthermore, since that $ \mathcal{L}({\textbf{\emph{P}}}) \ge n-1$ according to Ref. \cite{liu2017non}, we obtain $D_r({\emph{\textbf{x}}},{\emph{\textbf{W}}})  \le r^{-n}$. Let $D_r({\emph{\textbf{x}}},{\emph{\textbf{W}}}) \le \delta$, we have
\begin{flalign}
T \ge \frac{n-1-\mathcal{L}({\textbf{\emph{P}}}) -\ln \delta  }{\ln r}.
\end{flalign}

Furthermore, due to Ref. \cite{liu2017non}, $\mathcal{L}({\textbf{\emph{P}}}) \approx \ln p_{\alpha_1} e^{\frac{1-p_{\alpha_1}}{p_{\alpha_1}}}$ when $p_{\alpha_1}$ is small. Hence, for small $p_{\alpha_1}$, the RWRE in this case can be reduced to
\begin{flalign}
D_r({\emph{\textbf{x}}},{\emph{\textbf{W}}})\approx \frac{ e^{n-1/p_{\alpha_1}}} {p_{\alpha_1} }     r^{-T}   . 
\end{flalign}
It is easy to check that $D_r({\emph{\textbf{x}}},{\emph{\textbf{W}}})$ increases as $p_{\alpha_1}$ increases in this case.

Obviously, for a giving RWRE, the minimum required storage size for the quantification storage system decreases with increasing of $\mathcal{L}({\textbf{\emph{P}}})$. That is to say, the data with large NMIM will get large compression ratio. 
In fact, the NMIM in the typical small-probability event scenarios is generally large according to Ref. \cite{liu2017non}.
Thus, this compression strategy is effective in the typical small-probability event scenarios.

\section{Numerical Results}\label{Numerical Results}
We now present numerical results to validate the results in this paper. For ease of illustrating, we ignore rounding and adopt $l_i$ in (\ref{equ:thm optimal strategy}) as the optimal storage size of the $i$-th class.
\subsection{Optimal Storage Size Based on MIM in Ideal Storage System}
The broken line graph of the optimal storage size is shown in Figure \ref{fig:fig1}, when the probability distribution is $P=(0.03, 0.07, 0.1395, 0.2205, 0.25, 0.29)$. In fact, $0.2205 \approx \gamma_P$ and $1/n\approx 0.167$. The available storage size $T$ is $4$ bits, and the original storage size of each data is $10$ bits. The importance coefficients are given by $\varpi_1=-35,\varpi_2=-10,\varpi_3=0,\varpi_4=10,\varpi_5=35$ respectively.
Some observations can be obtained. 
When $\varpi>0$, the optimal storage size of the $i$-th class decreases with the increasing of its probability. 
On the contrary, the optimal storage size of the $i$-th class increases as its probability increases when $\varpi<0$. 
Besides, the optimal storage size is invariably $T$ ($T=4$) when $\varpi=0$.
Furthermore, $l_i$ increases as $\varpi$ increases for $i=1,2,3$, and it decreases with $\varpi$ for $i=5,6$.
For small importance coefficient ($\varpi_2,\varpi_3,\varpi_4$), $0<l_i<L$ holds for $i=1,2,...,6$, and $l_4$ is extremely close to $T$ ($T=4$).
\begin{figure}[htb!]
\centering
\includegraphics[width=10 cm]{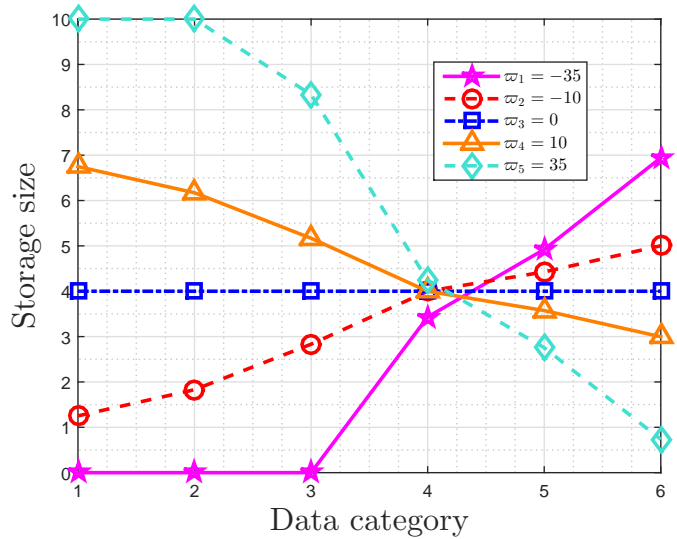}
  \caption{Broken line graph of optimal storage size with the probability distribution $(0.03, 0.07, 0.1395, 0.2205, 0.25, 0.29)$, for giving maximum available storage size $T=4$ and original storage size $L=10$.}\label{fig:fig1}
\end{figure}

\subsection{The Property of the RWRE Based on MIM in Ideal Storage System}
Then we focus on the properties of the RWRE. The available storage size $T$ is varying from $0$ to $8$ bits, and the original storage size of each data is $16$ bits. 
Figure \ref{fig:fig2} and Figure \ref{fig:fig3} both present the relationship between the RWRE and the available storage size $\Delta$ with the probability distribution $(0.031,0.052,0.127,0.208,0.582)$.

\begin{figure}[htb!]
\centering
\includegraphics[width=10 cm]{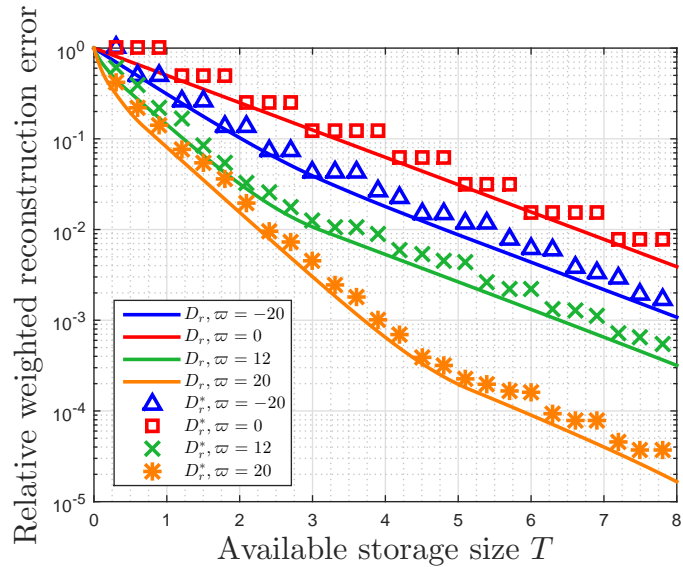}
  \caption{RWRE versus available storage size $T$ with the probability distribution $(0.031,0.052,0.127,0.208,0.582)$ in the case of the value of importance coefficient $\varpi=-20,0,-12,20$. $D_r$ is acquired by substituting Equation (\ref{equ:thm optimal strategy}) in Equation (\ref{equ: Dr_all}), while $D_r^*$ is obtained by substituting Equation (\ref{equ: l_star_floor}) in Equation (\ref{equ: Dr_all}).}\label{fig:fig2}
\end{figure}
\begin{figure}[htb!]
\centering
\includegraphics[width=10 cm]{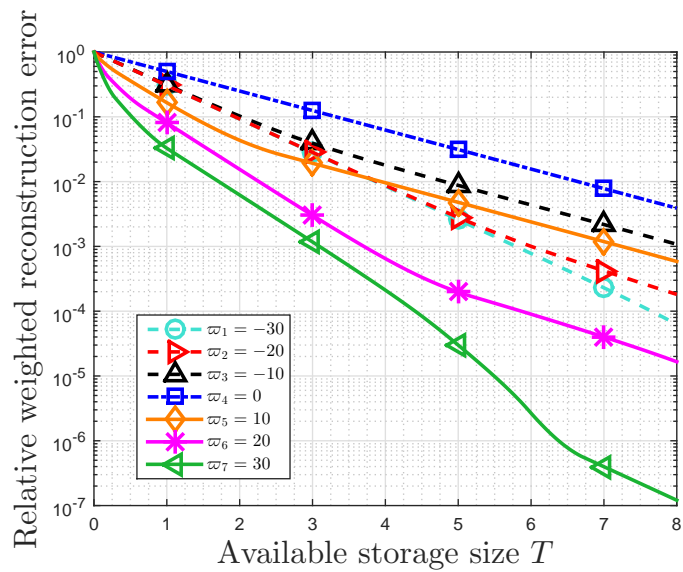}
  \caption{RWRE $D_r({\emph{\textbf{x}}},\varpi)$ versus available storage size $T$ with the probability distribution $(0.031,0.052,0.127,0.208,0.582)$ in the case of the value of importance coefficient $\varpi=-30,-20,-10,0,10,20,30$.}\label{fig:fig3}
\end{figure}
Figure \ref{fig:fig2} focuses on the error of RWRE by rounding number with different importance coefficient $\varpi$ ($\varpi=-20,0,-12,20$). In Figure \ref{fig:fig2}, the RWRE $D_r$ is acquired by substituting Equation (\ref{equ:thm optimal strategy}) in Equation (\ref{equ: Dr_all}), while the RWRE $D_r^*$ is obtained by substituting Equation (\ref{equ: l_star_floor}) in Equation (\ref{equ: Dr_all}). In this figure, $D_r^*$ has tierd descent as the available storage size increases, while $D_r$ monotonically decreases with increasing of the available storage size.
Figure \ref{fig:fig2} shows that $D_r$ is always less than or equal to $D_r^*$ and they are very close to each other for the same importance coefficient, which means that $D_r$ can be used as the lower bound of $D_r^*$ to reflect the characteristics of $D_r^*$.

Furthermore, some other observations can be obtained in Figure \ref{fig:fig3}. For the same $T$, the RWRE increases as $\varpi$ increases when $\varpi<0$, while the RWRE decreases with increasing of $\varpi$ when $\varpi>0$. Besides, the RWRE is the largest when $\varpi=0$.
It is also observed that the RWRE always decreases with increasing of $T$ for giving $\varpi$. Besides, for any importance coefficient, the RWRE will be $1$ if available storage size is zero. Generally, there is a trade-off between the RWRE and the available storage size, and the results in this paper propose an alternative lossy compression strategy based on message importance.

Then let the importance coefficient $\varpi$ be $5$ and the available storage size $T$ be varying from $2$ to $8$ bits. In addition, the original storage size is still $16$ bits. Besides, the compressed storage size is given by $\Delta=L-T$. 
In this case, Figure \ref{fig:fig4} shows that the RWRE versus compressed storage size $\Delta$ for different probability distributions. The probability distributions and some auxiliary variables are listed in Table \ref{tab:result1}.
Obviously, all probability distributions satisfy $0 \le T+{ \varpi(\gamma_p- p_i) }/{\ln r} \le L$.
It is observed that the RWRE always increases with increasing of $\Delta$ for a giving probability distribution.
Some other observations are also obtained. For the same $\Delta$, the RWRE of uniform distribution is the largest all the time. 
Furthermore, if the RWRE is required to be less than a specified value, which is exceedingly common in actual system in order to make the difference between the raw data and the stored data accepted, the maximum available compressed storage space increases with increasing of $L(\varpi,\emph{\textbf{P}}) +\varpi e^{-H_2(\emph{\textbf{P}})}$.
Besides, the maximum available compressed storage space is the smallest in uniform distribution.
As an example, when the RWRE is required to be smaller than $0.01$, the maximum available compressed storage space of $P_1$, $P_2$, $P_3$, $P_4$, $P_5$ is $11.85$, $10.97$, $9.99$, $9.73$, $9.36$ respectively.
In particular, the maximum available compressed storage size in uniform distribution is the smallest, which suggests the data with uniform distribution is incompressible.
\begin{figure}[htb!]
\centering
\includegraphics[width=10 cm]{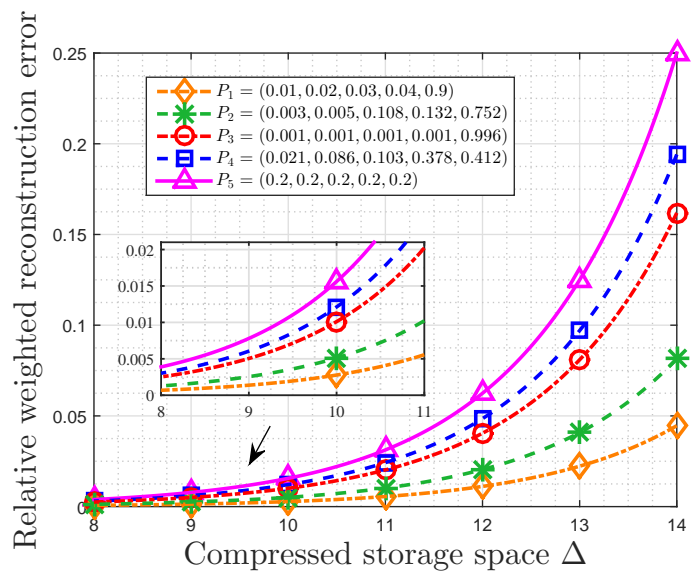}
   \caption{RWRE $D_r({\emph{\textbf{x}}},\varpi)$ vs. compressed storage size $\Delta$ with importance coefficient $\varpi=5$.}\label{fig:fig4}
\end{figure}
\begin{table}[htb!]
\centering
    \caption{The auxiliary variables in ideal storage system.}\label{tab:result1}
\begin{tabular}{ccccc}
\toprule
\textbf{Variable }&   \textbf{Probability distribution} &\boldmath{${ \varpi(\gamma_p-p_{\alpha_1})}/{\ln r}$} &\boldmath{${ \varpi(\gamma_p-p_{\alpha_2})}/{\ln r}$} &  \boldmath{$L(\varpi,\emph{\textbf{P}}) +\varpi e^{-H_2(\emph{\textbf{P}})}$}  \\
\midrule
{$P_1$} &  $(0.01,0.02,0.03,0.04,0.9)$  & 5.7924 &  -0.6276  & 6.7234\\
{$P_2$} & $(0.003,0.007,0.108,0.132,0.752)$   & 4.2679 & -1.1350  &6.1305\\
{$P_3$} &  $(0.001,0.001,0.001,0.001,0.996)$  & 7.1487 &  -0.0287&  5.4344\\
{$P_4$} &  $(0.021,0.086,0.103,0.378,0.412)$  &2.2367 & -0.5838  &5.2530\\
{$P_5$} &  $(0.2,0.2,0.2,0.2,0.2)$  & 0 & 0 & 5\\
\bottomrule
\end{tabular}
\end{table}

\subsection{The Property of the RWRE Based on NMIM in Quantification Storage System}
Afterwards, Figure \ref{fig:fig5} presents the relationship between the RWRE and available storage size $T$ for different probability distributions in the quantification storage system. The probability distributions and some auxiliary variables are listed in Table \ref{tab:result2}.
Some observations can be obtained. 
First, the RWRE always decreases with increasing of the available storage size for a giving probability distribution, and there is a trade-off between the RWRE and the available storage size. 
When the available storage size is small ($T<n /\ln r$), the RWRE decreases largely compared to the case where $T$ is large.
Besides, when the maximum available storage size is large ($T>n /\ln r$), the difference between these RWRE remains the same at logarithmic Y-axis. In fact, according to Equation (\ref{equ: Dr NMIM}), this difference between two probabilities in this figure is the difference of NMIM divided by $\log10$. As an example, the difference between ${\textbf{\emph{P}}}_1$ and ${\textbf{\emph{P}}}_4$ in this figure is 30, which satisfies this conclusion for the fact that $(\mathcal{L}({\textbf{\emph{P}}}_1)-\mathcal{L}({\textbf{\emph{P}}}_4))  /\log 10 \approx 30$. Moreover, the RWRE in ${\textbf{\emph{P}}}_1$ is very close to that in ${\textbf{\emph{P}}}_2$, and the minimum probabilities in these two probability distributions are the same, i.e., $p_{\alpha_1}=0.007$. It suggests that the data with the same minimum probability will have the same compression performance no matter how the distribution changes, if the minimum probability  is small.
In addition, it is also observed that the RWRE decreases as NMIM $\mathcal{L}({\textbf{\emph{P}}})$ increases for the same $T$, which means this compression strategy is effective in the large NMIM cases.
\begin{figure}[H]
\centering
\includegraphics[width=10 cm]{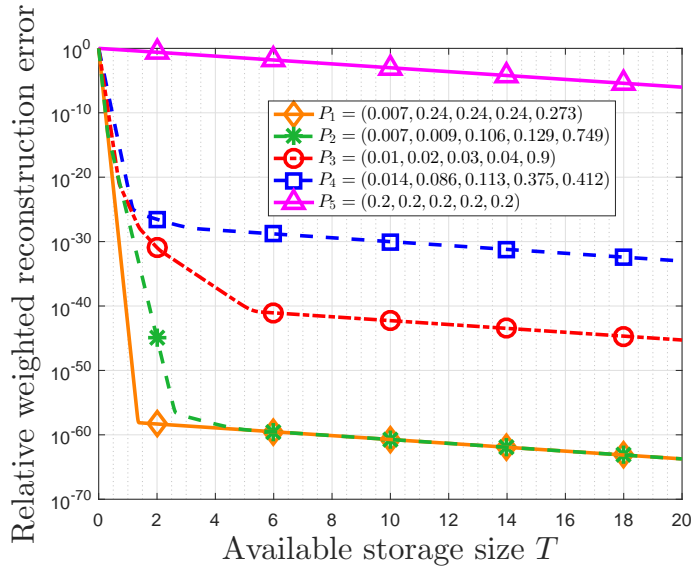}
   \caption{RWRE versus the available storage size $T$.}\label{fig:fig5}
\end{figure}
\begin{table}[H]
\centering
    \caption{The auxiliary variables in quantification storage system.}\label{tab:result2}
\begin{tabular}{cccc}
\toprule
\textbf{Variable }&   \textbf{Probability distribution} &\boldmath{$p_{\alpha_1}$}  &  \boldmath{$\mathcal{L}({\textbf{\emph{P}}})$}  \\
\midrule
{$P_1$} &  $(0.007,0.24,0.24,0.24,0.273)$  & 0.007 &  136.8953  \\
{$P_2$} & $(0.007,0.009,0.106,0.129,0.749)$   & 0.007 &  136.8953  \\
{$P_3$} &  $(0.01,0.02,0.03,0.04,0.9)$  & 0.01 &  94.3948  \\
{$P_4$} &  $(0.014,0.086,0.113,0.375,0.412)$  &0.014 & 66.1599  \\
{$P_5$} &  $(0.2,0.2,0.2,0.2,0.2)$  & 0.2 & 4.0000 \\
\bottomrule
\end{tabular}
\end{table}

\section{Conclusion}\label{sec: conclusion}
In this paper, we focused on the problem of lossy compression storage from the perspective of message importance when the reconstructed data pursues the least error with certain restricted storage size.
We started with importance-weighted reconstruction error to model the compression storage system, and formulated this problem as an optimization problem
for digital data based on it.
 We gave the solutions by a kind of restrictive water-filling, which presented a alternative way to design an effective storage space adaptive allocation strategy. 
In fact, this optimal allocation strategy prefers to provide more storage size for crucial event classes in order to make rational use of resources, which agrees with the individuals' cognitive mechanism.

Then, we presented the properties of this strategy based on MIM detailedly. 
It is obtained that there is a trade-off between the RWRE and available storage size.
Moreover, the compression performance of this storage system improves as the absolute value of importance coefficient increases.
This is due to the fact that a fraction of data can contain the overwhelming majority of useful information that exerts a tremendous fascination on users as the importance coefficient approaches negative/positive infinity, which suggests that the users' interest is highly-concentrated.
On the other hand, the probability distribution of event classes also has effect on the compression results. When the useful information is only highly enriched in only a small portion of raw data naturally from the viewpoint of users, such as the small-probability event scenarios, it is obvious that we can compress the data greatly with the aid of this characteristics of distribution.
Besides, the properties of storage size and RWRE based on non-parametric MIM were also discussed.
In fact, the RWRE in the data with uniform information distribution was invariably the largest in any case.
Therefore, this paper harbors the idea that the data with uniform information distribution is incompressible, which satisfies the results in information theory.

Proposing more general distortion measure between the raw data and the compressed data, which is no longer only apply to digital data, and using it to acquire the high-efficiency lossy data compression systems from the perspective of message importance are of our future interests.

\appendix
\section{}
\subsection{Proof of Theorem \ref{thm:p_l_W_monotonicity}}\label{Appendices p_l_W_monotonicity}
In fact, Equation (\ref{equ:thm optimal strategy water-filling}) can be rewritten as
\begin{flalign} \label{equ:thm optimal strategy water-filling monotonicity}
l_i= &\left\{
   \begin{aligned}
 & \quad\quad 0\quad\quad\quad\quad\quad\quad\quad\,\,\,\,\,\, \textrm{if} \,\, W_i < e^{-\beta \ln r}.\\
 & \beta -\frac{-\ln W_i}{\ln r}\quad\quad\quad\,\,\quad\quad \textrm{if} \,\, e^{-\beta \ln r} \le W_i \le e^{(L-\beta) \ln r}.\\
 & \quad\quad L \quad\quad\quad\quad\quad\quad\quad\,\,\,\,\,\,\textrm{if} \,\, W_i>e^{(L-\beta) \ln r}.
   \end{aligned}
   \right.
\end{flalign}
When $W_i>W_j$, we have
\begin{flalign} 
l_i-l_j= &\left\{
   \begin{aligned}
   &  0\quad\quad \quad\quad\quad\quad\quad\,\,\,   \textrm{if} \,\,\, p_i>e^{(L-\beta) \ln r} ,\,\,\, p_j>e^{(L-\beta) \ln r}.  \\
   &  L- \beta -\frac{\ln W_j}{\ln r}\quad\,\,\,\,\, \textrm{if} \,\,\, p_i>e^{(L-\beta) \ln r} ,\,\,\, e^{-\beta \ln r}\le p_j \le e^{(L-\beta) \ln r} .\\ 
    &  L\quad\quad\quad\quad\quad\quad\,\,\,\,\,\,\,\, \textrm{if} \,\,\, p_i>e^{(L-\beta) \ln r} ,\,\,\, p_j <e^{-\beta \ln r}.\\
    & \frac{\ln W_i -\ln W_j}{\ln r} \quad \,\,\,\,\,\, \textrm{if} \,\,\, e^{-\beta \ln r}\le p_i \le e^{(L-\beta) \ln r}  ,\,\,\, e^{-\beta \ln r}\le p_j \le e^{(L-\beta) \ln r}.\\
 &  \beta -\frac{-\ln W_i}{\ln r} \quad\quad\,\,\,\,  \textrm{if} \,\,\, e^{-\beta \ln r}\le p_i \le e^{(L-\beta) \ln r}  ,\,\,\, p_j <e^{-\beta \ln r}.\\
 &  0\quad\quad \quad\quad\quad\quad\quad\,\,\, \textrm{if} \,\,\, p_i <e^{-\beta \ln r} ,\,\,\, p_j <e^{-\beta \ln r}.
   \end{aligned}
   \right.
\end{flalign}
Due to Equation (\ref{equ:optimal storage strategy1 b}), we obtain that $0\le  \beta -\frac{-\ln W_i}{\ln r} \le L$, and therefore $L- \beta -\frac{\ln W_j}{\ln r} \ge 0$.
Besides, it is easy to check that $\frac{\ln W_i -\ln W_j}{\ln r}$ since that $W_i>W_j$. Thus, $l_i-l_j \ge 0$ if $W_i> W_j$ for $\forall i,j \in \{1,2,...,n\}$. The proof is completed.

\subsection{Proof of Lemma \ref{lem:p2_lower}}\label{Appendices A}
(1) For $\gamma_p$, it is noted that
\begin{flalign}
\sum\limits_{i=1}^n { p_i^2}=\frac{1}{n} \left({\sum\limits_{i=1}^n { p_i^2} \sum\limits_{i=1}^n { 1^2}}\right) \ge \frac{1}{n}  { {\left(\sum\limits_{i=1}^n { p_i}\right)}^2  }=\frac{1}{n},
\end{flalign}
where the equality holds only if $(p_1,p_2,...,p_n)$ is uniform distribution. Moreover,
\begin{flalign}
\sum\nolimits_{i=1}^n { p_i^2} \le \sum\nolimits_{i=1}^n { p_i}=1,
\end{flalign}
where the equality holds only if there is only $p_t=1$ ($t\in\{1,2,...,n\}$) and $p_k =0$ for $k \ne t$.

(2) For $\gamma_p-p_i$, we have $\sum\nolimits_{i=1}^n p_i^2-p_i \le \sum\nolimits_{i=1}^n p_i^2 \le 1$. We have equality if and only if $p_t=1$ and $p_i=0$ for $i \ne t$. Therefore, we only need to check $\sum\nolimits_{i=1}^n p_i^2-p_i \ge -1/4$.

First, if $n=1$, we obtain $\sum\nolimits_{i=1}^n p_i^2-p_i=0$.

Second, if $n=2$, we obtain $\sum\nolimits_{i=1}^n p_i^2-p_i=2(p_1-3/4)^2-1/8$. It is easy to check that $ \sum\nolimits_{i=1}^n p_i^2-p_i \ge -1/8$.

Third, if $n>3$, we use the method of Lagrange multipliers. Let 
\begin{flalign}\label{equ:optimal p2}
J(p)= \sum\limits_{j=1}^n p_j^2-p_i -\lambda (\sum\limits_{j=1}^n p_j -1) .
\end{flalign}
Setting the derivative to $0$, we obtain
\begin{flalign}
2 p^*_j -\lambda &=0   \,\,\, \textrm{for}\,j \ne i      \\
2 p^*_j -1-\lambda &=0\,\,\, \textrm{for}\,j = i.  \tag{\theequation a}
\end{flalign}
Substituting $p^*_j$ in the constraint $\sum\nolimits_{j=1}^n p^*_j =1$, we have
\begin{flalign}
\frac{\lambda (n-1)}{2}+\frac{\lambda +1}{2}=1.
\end{flalign}
Hence, we find $\lambda=1/n$ and 
\begin{flalign}
p^*_j= &\left\{
   \begin{aligned}
 &\frac{n+1}{2n}   \quad\quad \textrm{if}\,\,j=i,\\
 &\frac{1}{2n}\quad\quad\, \quad \textrm{if}\,\,  j \ne i.
   \end{aligned}
   \right.
\end{flalign}
In this case, we get
\begin{flalign}
\sum\limits_{j=1}^n p_j^2-p_i = { \frac{n-1}{4n^2} }+\frac{  {(n+1)}^2 }{4n^2}-\frac{n+1}{2n} =\frac{-n^2+n}{4n^2} \ge -\frac{1}{4}.
\end{flalign}
Thus, Lemma \ref{lem:p2_lower} is proved.

\subsection{Proof of Theorem \ref{thm:p_l_monotonicity}}\label{Appendices p_l_monotonicity}
(1) First, let $p_i<p_j$ when $\varpi>0$. It is noted that
\begin{flalign}
   W_i=\frac{e^{\varpi (1-p_i)}}{\sum\nolimits_{k=1}^n e^{\varpi(1-p_k)}} >\frac{e^{\varpi (1-p_j)}}{\sum\nolimits_{k=1}^n e^{\varpi(1-p_k)}} =W_j.
\end{flalign}
Therefore, we find $l_i\ge l_j$ since that $W_i>W_j$, due to Theorem \ref{thm:p_l_W_monotonicity}.

(2) Second, let $p_i<p_j$ when $\varpi<0$. It is noted that
\begin{flalign}
   W_i=\frac{e^{\varpi (1-p_i)}}{\sum\nolimits_{k=1}^n e^{\varpi(1-p_k)}} <\frac{e^{\varpi (1-p_j)}}{\sum\nolimits_{k=1}^n e^{\varpi(1-p_k)}} =W_j.
\end{flalign}
Therefore, we find $l_i\le l_j$ since that $W_i<W_j$, due to Theorem \ref{thm:p_l_W_monotonicity}. The proof is completed.

\subsection{Proof of Theorem \ref{thm: monotonicity of Dr}}\label{Appendices C}
We define an auxiliary function as 
\begin{flalign} 
f(\varpi)=  \frac{\sum\nolimits_{i=1}  p_i e^{\varpi(1-p_i)} r^{-l_i} } {\sum\nolimits_{j=1}^n  p_j e^{\varpi(1-p_j)}  }.
\end{flalign}
According to Equation (\ref{equ: Dr_all}), it is noted that the the monotonicity of $D_r({\emph{\textbf{x}}},\varpi)$ with respect to $\varpi$ is the same with that of $f(\varpi)$.

Without loss of generality, let $l_i$ of $p_i$ be
\begin{flalign} 
l_i= &\left\{
   \begin{aligned}
    & \quad\quad\, L \quad\quad\quad\quad\quad\quad\quad\,\,\,\,\,\,\textrm{if} \,\, i=1,2,...,t_1, \\
 &\frac{ \ln(\ln r) +\ln W_i  - \ln \lambda }{\ln r} \quad \textrm{if} \,\, i=t_1+1,...,t_2,\\
 & \quad\quad\, 0\quad\quad\quad\quad\quad\quad\quad\,\,\,\,\,\,\, \textrm{if} \,\,i=t_2+1,t_2+2,...,n,
   \end{aligned}
   \right.
\end{flalign}
where $\lambda$ is given by Equation (\ref{equ:lambda}) where $\{T_j, j=1,...,\tilde N_L\}=\{1,2,...,t_1\}$ and $\{I_j, j=1,...,\tilde N\}=\{t_1+1,...,t_2\}$.

The derivation of $l_i$ with respect to $\varpi$ is given by
\begin{flalign} 
l'_i= &\left\{
   \begin{aligned}
 &\frac{    \sum\nolimits_{k=t_1+1}^{t_2}  p_k(p_k-p_i)    }{  \ln r(\sum\nolimits_{k=t_1+1}^{t_2} p_k)}  \quad\, \textrm{if}\,\,\, i=t_1+1,...,t_2.\\
 & \quad\quad 0 \quad\quad\quad\quad\quad\quad\,\,\,\quad \textrm{else}.
   \end{aligned}
   \right.
\end{flalign}
Hence,
\begin{flalign} 
f'(\varpi)&=\frac{\sum\nolimits_{i}  {\sum\nolimits_{j}   p_i p_j  {  e^{\varpi(2-p_i-p_j)}  } r^{-l_i}  (p_j-p_i-l'_i \ln r)        }  } {  {\left(\sum\nolimits_{j} p_j e^{\varpi(1-p_j)}\right)}^2 }=\frac{F_1+F_2 } {  {\left(\sum\nolimits_{j} p_j e^{\varpi(1-p_j)}\right)}^2 } ,
\end{flalign}
where $F_1=\sum\nolimits_{j}   p_i p_j  {  e^{\varpi(2-p_i-p_j)}  } r^{-l_i}  (p_j-p_i)$ and $F_2=\sum\nolimits_{j}   p_i p_j  {  e^{\varpi(2-p_i-p_j)}  } r^{-l_i}  (-l'_i \ln r)$.

(1) When $\varpi>0$, we have
\begin{flalign} \label{equ: F1}
F_1&=  \sum\limits_{p_j< p_i}   p_i p_j  {  e^{\varpi(2-p_i-p_j)}  } r^{-l_i} (p_j-p_i)    + \sum\limits_{p_j> p_i}   p_i p_j  {  e^{\varpi(2-p_i-p_j)}  } r^{-l_i}  (p_j-p_i)      \\
&   \le     \sum\limits_{p_j< p_i}   p_i p_j  {  e^{\varpi(2-p_i-p_j)}  } r^{-l_j} (p_j-p_i)    + \sum\limits_{p_j > p_i}   p_i p_j  {  e^{\varpi(2-p_i-p_j)}  }  r^{-l_i} (p_j-p_i) \tag{\theequation a}\label{equ: F1 a} \\
 &=     \sum\limits_{p_j< p_i}   p_i p_j  {  e^{\varpi(2-p_i-p_j)}  } r^{-l_j} (p_j-p_i)    + \sum\limits_{ p_i>p_j}   p_i p_j  {  e^{\varpi(2-p_i-p_j)}  } r^{-l_j} (p_i-p_j)   \tag{\theequation b}\label{equ: F1 b} \\
&=     \sum\limits_{p_j< p_i}   M_{i,j} r^{-l_j} (p_j-p_i+p_i-p_j) \tag{\theequation c} \label{equ: F1 c}\\
&   =0.  \tag{\theequation d}  \label{equ: F1 d}
\end{flalign}
In fact, if $p_i>p_j$, then we will have $l_i\le l_j$ due to Theorem \ref{thm:p_l_monotonicity}. Thus, $ r^{-l_i} (p_j-p_i)\le  r^{-l_j} (p_j-p_i)$ in this case. With taking $p_i p_j  {  e^{\varpi(2-p_i-p_j)}  }\ge 0$ into account, we have Equation (\ref{equ: F1 a}).
Equation (\ref{equ: F1 b}) is obtained by exchanging the notation of subscript in the second item.

For $t_1<i\le t_2$ and $1\le j \le n$, we have
\begin{flalign} \label{equ: F2_zero}
F_2&=\sum\limits_{i=t_1+1}^{t_2}  \sum\limits_{j=1}^n   p_i p_j  {  e^{\varpi(2-p_i-p_j)}  } r^{-l_i} (-l'_i \ln r)         \\
&=\sum\limits_{i=t_1+1}^{t_2}  \sum\limits_{j=1}^n   p_i p_j {  e^{\varpi(1-p_j)-\ln \ln r +\ln \lambda} } (-l'_i \ln r)     \tag{\theequation a} \\
&= \sum\limits_{j=1}^n  \left(p_j  B_j \left(\sum\limits_{i=t_1+1}^{t_2} p_i  (-l'_i \ln r) \right)\right)  \tag{\theequation b}\\
&= \sum\limits_{j=1}^n  \left(p_j  B_j \left(  \frac{    \sum\limits_{i=t_1+1}^{t_2} p_i^2 \sum\limits_{k=t_1+1}^{t_2} p_k-\sum\limits_{k=t_1+1}^{t_2}  p_k^2\sum\limits_{i=t_1+1}^{t_2} p_i    }{  \sum\limits_{k=t_1+1}^{t_2} p_k}  \right)\right)   \tag{\theequation c} \\ 
&=0, \tag{\theequation d}
\end{flalign}
where $ B_j={  \textrm{exp}\{ {\varpi(1-p_j)-\ln \ln r +\ln \lambda} } \}$.

Based on the discussions above, we have
  \vspace{-1mm}
\begin{flalign} 
f'(\varpi)&=\frac{F_1+F_2 } {  {\left(\sum\nolimits_{i=1}^n p_i e^{\varpi(1-p_i)}\right)}^2 } \le 0.
\end{flalign}
Since that $f'(\varpi)\le 0$ when $\varpi>0$, $D_r({\emph{\textbf{x}}},\varpi)$ is monotonically decreasing with $\varpi$ in $(0, +\infty)$. 

(2) Similarly, when $\varpi<0$, if $0<p_j<p_i$, then we will have $l_i>l_j$ due to Theorem \ref{thm:p_l_monotonicity}. Thus, $ r^{-l_i} (p_j-p_i)\ge  r^{-l_j} (p_j-p_i)$ in this case. With taking $p_i p_j  {  e^{\varpi(2-p_i-p_j)}  }\ge 0$ into account, we have
\begin{flalign} \label{equ: F1 negiative w}
F_1&   \ge      \sum\limits_{p_j< p_i}   p_i p_j  {  e^{\varpi(2-p_i-p_j)}  } r^{-l_j} (p_j-p_i)    + \sum\limits_{p_j > p_i}   p_i p_j  {  e^{\varpi(2-p_i-p_j)}  }  r^{-l_i} (p_j-p_i)    \\
 & =     \sum\limits_{p_j< p_i}   p_i p_j  {  e^{\varpi(2-p_i-p_j)}  } r^{-l_j} (p_j-p_i)    + \sum\limits_{ p_i>p_j}   p_i p_j  {  e^{\varpi(2-p_i-p_j)}  } r^{-l_j} (p_i-p_j)   \tag{\theequation a} \label{equ: F1 negiative w a} \\
& =     \sum\limits_{p_j< p_i}   M_{i,j} r^{-l_j} (p_j-p_i+p_i-p_j)    \tag{\theequation b}   \label{equ: F1 negiative w b}  \\
&=0,  \tag{\theequation c}   \label{equ: F1 negiative w c}
\end{flalign}
where Equation (\ref{equ: F1 negiative w a}) is obtained by exchanging the notation of subscript in the second item.

Besides, $F_2$ is still given by Equation (\ref{equ: F2_zero}), and $F_2=0$. 
As a result, $f'(\varpi) \ge 0$ when $\varpi<0$.
Therefore $D_r({\emph{\textbf{x}}},\varpi)$ is monotonically increasing with $\varpi$ in $(-\infty, 0)$.

(3) When $\varpi=0$, the storage size $l_i$ for $i=1,2,...,n$ will be all equal to $T$, and therefore $D_r({\emph{\textbf{x}}},0)=(r^{L-T}  -1)/(r^L-1) $. Based on the discussion in (1) and (2), we obtain $D_r({\emph{\textbf{x}}},\varpi)\le D_r({\emph{\textbf{x}}},0)$. The proof is completed.






\end{document}